\newtheorem{lemma}{Lemma}
\newtheorem{thm}{Theorem}
\newtheorem{cor}{Corollary}
\newtheorem{dfn}{Definition}
\providecommand{\abs}[1]{\left\lvert#1 \right\rvert}
\providecommand{\norm}[1]{\left\lVert#1 \right\rVert}
\providecommand{\argmin}[1]{\underset{#1}{\operatorname{argmin}}}
\providecommand{\vc}[1]{\boldsymbol{#1}}
\providecommand{\del}[1]{\widehat{#1}}
\newcommand{\diag}{\operatorname{diag}}
\begin{document}
\title{The Impact of Communication Delays on Distributed Consensus Algorithms}

\author{Konstantinos~I.~Tsianos and Michael~G.~Rabbat\thanks{The authors are with the Department of Electrical and Computer Engineering, McGill University, Montr\'eal, Canada. Email: konstantinos.tsianos@mail.mcgill.ca, michael.rabbat@mcgill.ca}\thanks{Portions of this work were previously presented at the \textit{49th Allerton Conf.~on Communication, Control, and Computing}~\cite{tsianosAllerton2011}.}}

\maketitle

\begin{abstract}
We study the effect of communication delays on distributed consensus algorithms. Two ways to model delays on a network are presented. The first model assumes that each link delivers messages with a fixed (constant) amount of delay, and the second model is more realistic, allowing for i.i.d.~time-varying bounded delays. In contrast to previous work studying the effects of delays on consensus algorithms, the models studied here allow for a node to receive multiple messages from the same neighbor in one iteration. The analysis of the fixed delay model shows that convergence to a consensus is guaranteed and the rate of convergence is reduced by no more than a factor $O(B^2)$ where $B$ is the maximum delay on any link. For the time-varying delay model we also give a convergence proof which, for row-stochastic consensus protocols, is not a trivial consequence of ergodic matrix products. In both delay models, the consensus value is no longer the average, even if the original protocol was an averaging protocol. For this reason, we propose the use of a different consensus algorithm called Push-Sum~[Kempe et al.~2003]. We model delays in the Push-Sum framework and show that convergence to the average consensus is guaranteed. This suggests that Push-Sum might be a better choice from a practical standpoint.
\end{abstract}

\section{Introduction}

This article aims to and understand the effects of communication delays on discrete-time distributed consensus algorithms. We build on two frameworks to model delay that were proposed in~\cite{tsianosAllerton2011}. For a simple model assuming fixed delays on the directed edges of a communication network, the question of how much the consensus convergence rate deteriorates in the presence of fixed delays was left open in~\cite{tsianosAllerton2011}. Here we prove that if the maximum delay on any edge is $B$, then the time to reach an $\epsilon$-accurate consensus in the delayed setting is no more than $O(B^2)$ iterations larger than that in the delay-free setting. For the fixed delay model, we generalize the construction of the random delay model presented in~\cite{tsianosAllerton2011} to use any arbitrary row stochastic consensus algorithm $P$ without delays. Our second major contribution is a formal convergence proof for the time-varying delay model. Finally, we show how both the fixed and random delay models can by used with a different consensus algorithm called Push-Sum consensus~\cite{PushSum}. For the random delay case we show that the delay model is simplified while convergence to the true average is still guaranteed. We conclude the paper with simulations that illustrate the effects of delays in distributed consensus computations. 

Our motivation to study communication delays comes from problems in distributed optimization and large-scale machine learning. The dramatic increase in available data has made the use of   parallel and distributed algorithms imperative for large problems (see for example \cite{langfordBook,boydAlternatingMultipliers}). Among numerous alternatives, a significant amount of research has focused on developing consensus based algorithms \cite{dualAveraging,distrStochSubgrOpt,boydAlternatingMultipliers,nedicDistributedOptimization,JohanssonIncrementalSubgrad} which combine some version of local optimization with a distributed consensus protocol running over a peer-to-peer  network. With such an approach, all computing nodes have the same role in the optimization procedure, thereby eliminating single points of failure and increasing robustness. This is important in large scale systems where machines may fail during the computation.  At the same time, consensus-based algorithms are simple to implement and avoid the bookkeeping required by algorithms using more structured routing. The consensus approach is also flexible and allows for adding more computational resources. On the other hand, peer-to-peer networks lack a highly organized infrastructure and coordinating the computing nodes becomes a challenge. Much of the recent analysis of consensus algorithms focuses on the case where communication is over a wireless network~\cite{gossipReview}.

For implementations of consensus-based optimization algorithms running on (wired) compute clusters, the issue of communication delays arises quite naturally. For example, in typical machine learning problems, the decision variable (and hence the message size) can quickly exceed many megabytes in size. During the time it takes to transmit such large messages, a modern processor can perform a significant amount of local processing of its own data, and the  received information always appears to be delayed. In addition, cluster computing resources are typically shared among many users, and delays to one task are introduced if processors devote some of their cycles to other unrelated tasks. Finally, any network infrastructure is bound to have some fluctuation in its performance for reasons beyond our control. It is thus important first to model communication delays, and then incorporate those models in the analysis of consensus algorithms to understand what the effects of delays will be. 

\subsection{Contributions}

In this article we study communication delays in discrete time and study their effects on convergence of consensus algorithms, focusing on distributed averaging. The main contributions of the paper are the following:

\textit{Consensus under Fixed Delays---The effect of delay on convergence rate:} Previous work \cite{tsianosAllerton2011} introduced a fixed delay model where transmissions over each directed link of a network experience some fixed amount of delay that does not exceed $B$. Starting with a doubly stochastic consensus protocol $P$ it was shown that consensus is still achieved in the presence of fixed delays at an exponential rate which depends on the second largest eigenvalue of $\del{P}$, the modified consensus algorithm accounting for delays.  In this paper we use geometric arguments to show that the rate of convergence does not get worse by more than an factor of $O(B^2)$. 

\textit{Random delay consensus under general row stochastic protocols:} Given a strongly connected graph $G$, in \cite{tsianosAllerton2011} a construction is given for building a matrix $\del{P}$ that describes the consensus updates on $G$ under the assumption that each message experiences a random amount of delay that does not exceed $B$ iterations.  Here, we generalize this model so that $\del{P} = \del{P}(P)$; i.e., $\del{P}$ is constructed from a given row stochastic consensus protocol $P$ defined on $G$ without delays. 

\textit{Random delay consensus---Convergence proof for row stochastic protocols:} If the initial protocol $P$ on a graph $G$ without delays is row stochastic, using the proposed random delay model, the consensus dynamics are captured by a sequence of matrices $\del{P}(t)$ which may contain all-zero rows. This means that although the consensus updates remain linear, convergence cannot be established based on standard theory for stochastic matrix products. Here we give a complete proof of convergence under this random delay model. 

\textit{Delays under Push-Sum consensus:} We study a different consensus algorithm called Push-Sum consensus\cite{PushSum} which uses column stochastic matrices. We show that convergence properties of Push-Sum are not affected in the presence of delays, and the aforementioned convergence results and bounds still apply. In particular, it is noteworthy that consensus on the average is guaranteed even in the presence of bounded random delays.

\subsection{Paper Organization}

The rest of the paper is organized as follows. We first summarize our notational conventions in Section~\ref{sec:notation}. Section~\ref{sec:previous} reviews related work and Section~\ref{sec:consensus} briefly reviews the consensus problem. The fixed delay model and related results are given in Section~\ref{sec:fixed}. Next, Section~\ref{sec:random} describes and analyzes the random delay model. Illustrative simulation results appear in Section~\ref{sec:simulations}, and the paper concludes in Section~\ref{sec:future} with a discussion of possible extensions and future work.


\subsection{Notation} 
\label{sec:notation}
We use bold to indicate vectors; e.g. $\vc{x}$. Time $t$ is always discrete and time dependence is shown as $\vc{x}(t)$. Vectors are indexed by subscripts, i.e., $x_i(t)$ or $[x(t)]_i$ when it is more clear. For a set of indices $S$, by $x_S$ we mean the entries of the vector $\vc{x}$ corresponding to the elements in $S$, and to index the range of indices from $i$ to $j$ in the vector $\vc{x}$ we use the notation $[\vc{x}(t)]_{i:j}$. Capital letters are used for matrices and we write $p_{ij}$ , $P(i,j)$ or $[P]_{ij}$ for the element in row $i$ and column $j$ of matrix $p$; we also write $[P]_{i,:}$ for the $i$-th row and $[P]_{:,j}$ for the $j$-th column. A matrix transpose is denoted by $P^T$. In many contexts we talk about a quantity such as a graph $G$ or a matrix $P$ and the corresponding quantity in the presence of delays. We write $\del{G}$ and $\del{P}$ to denote versions of $G$ and $P$ under the delay model. The vector of all ones is indicated by $\vc{1}$ and the vector of all zeros by $\vc{0}$. We use a subscript to show the dimension of the vector, as in $\vc{1}_n$, when it is not clear from the context. We also use the indicator function $\mathds{1}[event]$ which is equal to $1$ if the $event$ is true and zero otherwise. For a graph $G=(V,E)$ to talk about a directed edge from node $i$ to node $j$ we may use $(i,j)$ or $i \rightarrow j$ or just a superscript ${ij}$.

\section{Previous Work}
\label{sec:previous}


There is a rich literature on distributed averaging algorithms; see~\cite{gossipReview,saberReview} and references therein. A lot of effort has been focused on analyzing the rate of convergence to the average consensus \cite{consensusTsitsiklis}. The connection between consensus protocols and the convergence of Markov chains \cite{randomizedGossip} reveals that the spectral properties of the underlying network play an important role in the convergence rate. Of practical interest are asynchronous consensus algorithms. In \cite{BroadcastGossip} is it shown that using asynchronous broadcasts and forming convex combinations of incoming information guarantees convergence to the average only in expectation. For time-varying protocols, \cite{WeakErgodicityConditionsJadbabaie07} provides necessary conditions under which convergence is achieved while \cite{AsymptoticConsensusVal} characterizes the expectation and variance of the consensus value. Interestingly, in this paper we show that convergence to the true average under the same conditions for time varying protocols is guaranteed when using a different type of algorithm called Push-Sum\cite{PushSum,WeightedGossip}.

The main focus of this work is the effect of communication delays on consensus algorithms. For applications in partial differential equations, distributed control and multi-agent coordination  \cite{trecateConsensus,richardDelayOverview}   and \cite{saberDelays,delayedConsensus} analyze continuous-time delay models where all messages incur the same constant delay. Our motivation comes from applications in distributed optimization where both computation and communication happen in rounds and take a significant amount of time. For this reason we focus on discrete-time models. An early treatment of delays in discrete-time distributed averaging algorithms can be found in \cite{bertsekasParallel}, where it is proved that convergence is not guaranteed if delays are unbounded.  An analysis of conditions for convergence in the presence of delays is given in \cite{consensusTsitsiklis}. Closer to our work are \cite{delaysCao}, \cite{nedicConsensusDelays} and \cite{VaidyaDelays2} which model delays in discrete time for consensus problems by augmenting the state space with delay nodes. However, in  \cite{delaysCao} the value to which the consensus algorithm asymptotically converges is not characterized. The model in \cite{VaidyaDelays2} accumulates all the delayed information in a single delay node and does not allow for delivery of messages out of order. The model in \cite{nedicConsensusDelays} has the same expressive power as our random delay model, although the equation describing the consensus dynamics in~\cite{nedicConsensusDelays} does not allow for receiving multiple messages from the same sender in one iteration.

\section{Distributed Averaging}
\label{sec:consensus}

Assume each node $i \in V$ in a strongly connected network $G=(V,E)$ of $|V| = n$ nodes holds a value $v_i$. We stack the initial values in a vector $\vc{x}(0) = (v_1, \ldots, v_n)^T$. The general consensus problem asks for a distributed algorithm such that the nodes of the network exchange messages with their neighbours and update their state to reach consensus i.e., $\vc{x}(t) \rightarrow c \vc{1}$ as $t \rightarrow \infty$. In other words, we want the nodes to agree on a common value $c$ using only local communication. It follows from Perron-Frobenius theory\cite{SenetaBook} that if we choose a row stochastic matrix $P$ that respects the structure of  the graph in the sense that $p_{ij} \neq 0$ if $(j,i) \in E$, consensus is achieved by the iteration
\begin{align} \label{eq:consensus_intro}
\vc{x}(t) = P \vc{x}(t-1) = P^t \vc{x}(0).
\end{align}
The reason is that $P \vc{1} = \vc{1}$ and $\vc{1}$ is the unique eigenvector corresponding to the eigenvalue $1$ while all the other eigenvalues have magnitude less than one and their contribution vanishes if we consider the eigendecomposition of $P^t$ as $t \rightarrow \infty$. As a result, $P^t$ converges to a rank-$1$ matrix where each row is equal to the stationary distribution $\vc{\pi}$ of the Markov chain associated with $P$. In the special case where $\vc{1}^T P = \vc{1}^T$, the matrix $P$ is doubly stochastic and the consensus value is $c = \frac{1}{n}$; i.e., consensus is achieved on the average. Some situations may require using a protocol which corresponds to a row stochastic update matrix $P$, e.g., because $G$ does not admit a doubly stochastic matrix~\cite{CortesDoublyStochDigraphs}. In such situations, if the stationary distribution $\vc{\pi}$ of $P$ is known in advance then consensus on the average can still be achieved by rescaling the initial values by $(n \vc{\pi}_i )^{-1}$\cite{TsitsiklisConvergenceSpeed09}. Reaching consensus on the average is particularly important in distributed optimization since, if consensus is achieved on a value other than the average, an undesired bias is introduced\cite{TsianosACC2012}.

When the protocol $P$ is fixed, the update \eqref{eq:consensus_intro} represents a \textit{synchronous} algorithm where all nodes transmit information to their neighbours at the same time and each node receives exactly one message from each neighbour at each iteration. If we want to model scenarios where nodes communicate asynchronously or, as we will see below, if we want to model random communication delays where information may arrive in a different order than it was transmitted and we receive an unknown number of messages from each neighbour, we must consider time-varying protocols $P(t)$. The situation now becomes more involved as we may not be able to specify the stationary distribution to which the algorithm converges beyond its mean and variance \cite{AsymptoticConsensusVal}. Furthermore if we restrict to protocols where each node only transmits information without expecting a response---i.e., one-directional communication---using time-varying doubly stochastic protocols becomes impossible without extra coordination, while row stochastic protocols only converge to the average in expectation\cite{BroadcastGossip}. For these reasons, in the following we also consider a different type of consensus algorithm called Push-Sum consensus which does not have these limitations in the time-varying case.

\section{Fixed Communication Delays}
\label{sec:fixed}

We first analyze a model where the delay over each communication link does not vary with time. This is generally not true in practice but a fixed delay model can be appropriate in an average sense when the true delay does not fluctuate too much. An open question in \cite{tsianosAllerton2011} for this model, is how does the convergence rate of consensus with fixed delays depend on the maximum delay $B$. After reviewing the fixed delay model, we provide an answer below. 

Note that for the rest of this section, whenever we talk about a quantity $Q$, such as a graph or a matrix, we use a hat (i.e., $\del{Q}$) for the transformed version of $Q$ in the presence of delays.

\subsection{Fixed Delay Model}
\label{sec:fixeddelays}

Assume that in a given network $G$, for a directed link $(i,j)$, every message from $i$ to $j$ is delayed by $b_{ij}$ time units. We model this delay by replacing the link $(i,j)$ with a chain of $b_{ij}$ virtual \textit{delay nodes} in the network, acting as relays between $i$ and $j$. This leads to a network $\del{G}$ which contains the original compute nodes, $V$, as well as $b= \sum_{(i,j) \in E} b_{ij}$ delay nodes. Our goal is to study the corresponding consensus protocol running over $\del{G}$. We assume that a consensus protocol $P$ in the delay-free network $G$ is given so in the presence of delays, the compute nodes still transmit and combine incoming messages using the weights provided by $P$. In  \cite{tsianosAllerton2011}, we describe how to construct a stochastic matrix $\del{P}$ in the augmented space of $n+b$ nodes starting from a delay-free consensus protocol $P$. The matrix $\del{P}$ encodes  communication of information between delay and compute nodes and has a stationary distribution $\del{\vc{\pi}}$ which is not uniform and depends on both $P$ and the edge delays. We clarify that the augmentation of $G$ with delay nodes is done just for the purpose of modelling and the analysis; no physical delay nodes are actually added to the network. 

To illustrate the construction of $\del{P}$ from $P$ , consider a  graph $G$ with $3$ nodes.  Suppose that the delay-free consensus protocol is specified by the matrix
\begin{equation} \label{eq:examplePorig}
P = \begin{bmatrix}
  \frac{2}{3} & \frac{1}{3} & 0              \\[0.3em]
  \frac{1}{6} & \frac{1}{3} & \frac{1}{2}    \\[0.3em]
  \frac{1}{6}           & \frac{1}{3} & \frac{1}{2}
\end{bmatrix}.
\end{equation}
To model a fixed delay of $2$ whenever node $1$ transmits to node $2$, we augment $G$ with two delay nodes $d_1^{1\rightarrow 2}, d_2^{1\rightarrow 2}$ so that information from $1$ to $2$ must pass through them first.
In the augmented graph $\del{G}$, the consensus protocol is described by a row stochastic matrix $\del{P}$. Using the rows of $P$ we write $\del{P}$ as
\begin{equation} \label{eq:exampleQ}
\del{P} = \kbordermatrix{
~ & 1 & 2 & 3 & d_{1}^{1\rightarrow 2} & d_{2}^{1\rightarrow 2} \cr
1 &  \frac{2}{3} & \frac{1}{3} & 0         & 0 & 0   \cr
2 &  0 & \frac{1}{3} & \frac{1}{2} & 0 & \frac{1}{6}  \cr
3 &   \frac{1}{6}           & \frac{1}{3} & \frac{1}{2} & 0 & 0   \cr
d_{1}^{1\rightarrow 2} &       1           & 0                 & 0           & 0 & 0   \cr
d_{2}^{1\rightarrow 2} &       0           & 0                 & 0           & 1 & 0  \cr
}.
\end{equation}
Each receiving node forms a convex combination of the incoming messages so in $\del{P}$, node $2$ receives information from node $d_2^{1 \rightarrow 2}$ with weight $\frac{1}{6}$ because $p_{2,1} = \frac{1}{6}$.

Using $\del{P}$ we can analyze the effect of delays on convergence based on the update equations for row stochastic consensus
\begin{align}
\del{\vc{x}}(t) = \del{P} \del{\vc{x}}(t-1),
\end{align}
where $\del{\vc{x}}(t)$ is the augmented state vector of dimension $n+b$ containing values for the compute nodes and virtual delay nodes. If $P$ is doubly stochastic, our previous work~\cite{tsianosAllerton2011} provides an exact characterization of $\del{\vc{\pi}}$, the stationary distribution of $\del{P}$. Let us index the directed edges of $G$ (without delays) by $r=1,2,\ldots,m$. We use the notation $\big(i(r), j(r)\big)$ to specify that edge $r$ starts at node $i$ and is directed to node $j$. Moreover, let $b_r$ denote the amount of delay on edge $r$, and with a slight abuse of notation, let $\del{\pi}_r$ denote the value of the stationary distribution vector for all delay nodes in the chain replacing edge $r$. The stationary distribution of $\del{P}$ has the structure
\begin{equation}
\del{\vc{\pi}} = [\del{\pi}_{V} \mathbf{1}_{n}^{T}\ \ \del{\pi}_{1} \mathbf{1}_{b_{1}}^{T}\ \cdots\  \del{\pi}_{m} \mathbf{1}_{b_{m}}^{T} ]^{T},
\end{equation}
and the exact values are 
\begin{align} \label{eq:stationary_distribution_fixed}
\del{\pi}_{V} = & \frac{1}{n + \sum_r b_{r} p_{i(r)j(r)}} ,\ \ \ \del{\pi}_{r} =  \frac{p_{i(r)j(r)}}{n + \sum_r b_{r} p_{i(r)j(r)}}.
\end{align}

In the special case where $P$ is a \textit{max-weight} doubly stochastic matrix\footnote{For an undirected graph $G$ without self loops, with adjacency matrix $A$ and node degrees $\vc{v} = [deg_1 \dots , deg_n]$ the max-weight matrix is defined as $P = I - \frac{diag(\vc{v}) - A}{\max_i deg_i + 1}$ and is doubly stochastic.}, the entries of $\del{\vc{\pi}}$ only take one of two values, one for the compute nodes in the set $V$ and one for the delay nodes i.e., it does not matter how the delays are distributed over the links. Specifically,  denoting by $C$ the set of delay nodes we have
\begin{align}
\del{\pi}_{V} & = \frac{d_{max}+1}{b + n(d_{max}+1)}, \ \ \  \del{\pi}_{C}  = \frac{1}{b + n(d_{max}+1)}
\end{align}
where $d_{max}$ is the maximum degree of $G$ viewed as undirected ignoring self-loops.


Notice that even when $P$ is doubly stochastic (and thus admits average consensus), the row stochastic delayed protocol $\del{P}$ does not converge to the average in general, since its stationary distribution is not uniform. To converge to the average with $\del{P}$ we need to rescale the initial values as explained in Section \ref{sec:consensus}, using the stationary distribution of $\del{P}$. 

By construction, the delay nodes only relay information and have no self loops. Thus, the diagonal entries in $\del{P}$ corresponding to delay nodes are zero. This makes $\del{P}$ a non-reversible Markov chain that is not strongly aperiodic\footnote{A Markov chain is strongly aperiodic if all the diagonal entries of its transition matrix are at least $1/2$.}, and the majority of known convergence rate results for Markov chains do not apply. To get a bound on the convergence rate under fixed delays, we apply the result from \cite{fillBounds} with the lazy version $\del{P}_{lazy} = \frac{1}{2}(I + \del{P})$ of $\del{P}$. First, the \textit{additive reversibilization} of a Markov chain with transition matrix $P$ is defined by:
\begin{align}
U(P) = \frac{P + \tilde{P}}{2},
\end{align}
where $\tilde{P}$ is the \textit{time-reversed} chain. Next, since $\del{P}_{lazy}$ is non-reversible but strongly aperiodic and converges no more than two times slower than $\del{P}$, applying Fill's result~\cite{fillBounds} we have
\begin{align} \label{eq:aq}
\norm{[\del{P}^{t}]_{i,:} - \del{\vc{\pi}}}_{TV}^{2} \leq & \norm{[\del{P}_{lazy}^{t}]_{i,:} -\del{\vc{\pi}}_{lazy}}_{TV}^{2}  \notag \\ 
\leq &\frac{(\lambda_{2}(U(\del{P}_{lazy})))^{t}}{4 [\del{\pi}_{lazy}]_i}
\end{align}
with $\del{\vc{\pi}}_{lazy} = \del{\vc{\pi}}$. 

Our initial work~\cite{tsianosAllerton2011} left open the question of to what extent delays effect the convergence rate of average consensus protocols. One way to address this is to understand how much larger is $\lambda_2(U(\del{P}_{lazy}))$ in comparison to $\lambda_2(P)$. We provide an answer next.

\subsection{Effect of Delays on Second Eigenvalue}

The convergence rate of a consensus protocol $P$ to stationarity in terms of total variation distance can be bounded by $\lambda_{2}(P)$, the second largest eigenvalue of $P$. The second eigenvalue in turn can be bounded using a geometric argument based on the \emph{Poincar\'{e}\ inequality}\cite{StookDiaconis,fillBounds}. The intuition is to look for the bottleneck edge which limits the flow of information and consequently the convergence speed. Assume the stationary distribution of $P$ is $\vc{\pi}$. For each pair of nodes $\{x,y\}$ of $G$, we choose a (directed) path $\gamma_{xy}$ from $x$ to $y$. To identify bottlenecks we look at how many paths $\gamma_{xy}$ go through the same edge. A measure of bottlenecks in $G$, is given by the \textit{Poincar\'e constant},
\begin{align} \label{eq:poincareK}
K = \max_{e=(v,w)}  \left[ \frac{1}{\pi_v p_{vw} } \sum_{x,y\ s.t. \ e \in \gamma_{xy}}  \abs{\gamma_{xy}} \pi_x \pi_y \right],
\end{align}
where $\abs{\gamma_{xy}} $ is the length (in number of edges) of the path $\gamma_{xy}$. The constant $K$ quantifies the load on the most heavily used edge. Less formally, that involves identifying an edge through which many and long paths must pass for pairs of nodes to communicate over $G$. In addition, the paths are assigned an importance based on the stationary distribution value at the endpoints. Depending on the quality of the paths, we get a more accurate characterization of bottlenecks. Given a set of paths $\Gamma = \{\gamma_{xy}\}$, the Poincar\'e constant gives a bound on the second eigenvalue of $P$:
\begin{align}
\lambda_{2} \leq 1 - \frac{1}{K}.
\end{align}
Our goal is to use a given set of canonical paths $\Gamma$ for $G$ to construct a set of canonical paths in $\del{G}$, the augmentation of $G$ after adding fixed edge delays. This will reveal how the delays effect the convergence rate we have for $P$. To that end, we compute the Poincar\'e constant for $\del{G}$ as a function of the Poincar\'e constant of the original graph $G$. 

Since $\del{P}$ represents a non-reversible Markov Chain, we consider the lazy additive reversibilization $U(\del{P}_{lazy})$ which is strongly aperiodic, reversible, has the same stationary distribution as $\del{P}$, and whose convergence rate bounds that of $\del{P}$. With the exception of some added self loops on the delay nodes, the graph structure compatible with $U(\del{P}_{lazy})$ is the same as that of $\del{P}$. To compute the Poincar\'e constant $\del{K}$ for $\del{G}$ we start with some observations and consequences of augmenting $G$ with fixed delays. We assume that the maximum delay on any edge is $B$ and we use subscripts to index the nodes on a delay chain. 

1. We claim that if $e=(v,w)$ is the bottleneck edge in $G$ with no delays, all edges on the delay chain $v \rightarrow d_{1} \rightarrow \cdots \rightarrow d_{B'} \rightarrow w, B' \leq B$, that replaces $e$ in $\del{G}$ are bottlenecks in $\del{G}$. The reason is that if a flow needs to go through $e$ in $G$, it will have to go through all of the delay edges replacing $e$ in $\del{G}$. This is true because the degrees of the compute nodes do not change by adding fixed delays the way we described above, and the paths between the compute nodes are just elongated without offering new path alternatives. As a result, to compute the Poincar\'e constant of $U(\del{P}_{lazy})$ we do not need to maximize over all edges in $\del{G}$. Instead we only examine edges in the middle of delay chains. That is, if a delay chain connecting compute nodes $a$ and $b$ has length $B'$, we only consider the edge $\del{e} = (d^{ab}_{\lfloor \frac{B'}{2}\rfloor} , d^{ab}_{\lfloor\frac{B'}{2}\rfloor+1})$.

2. \label{item:pathcases}
We intend to use the given collection of canonical paths $\Gamma$ on $G$ to derive a bound on the Poincar\'e constant of $\del{G}$. The graph with delays has more nodes and thus more paths to be considered. However, we can associate a collection of paths of $\del{G}$ with the same path in $G$ using the compute nodes as identifiers for each path. The key point is to ensure that if a path $\gamma_{xy}$ goes through an edge $e$ of $G$, then in $\del{G}$ we have a set of paths $\{\del{\gamma}_{xy} \}$ identified by the same compute nodes $x \rightarrow y$. All those paths go through $\del{e}$, the edge in the middle of the delay chain that replaced $e$ in $\del{G}$. By forming this path association, the expression for $K$ will appear in the bound for $\del{K}$. Figure \ref{fig:pathassociation} illustrates the path association.

\begin{figure}[t]
\begin{center}
\includegraphics[width=3.5in]{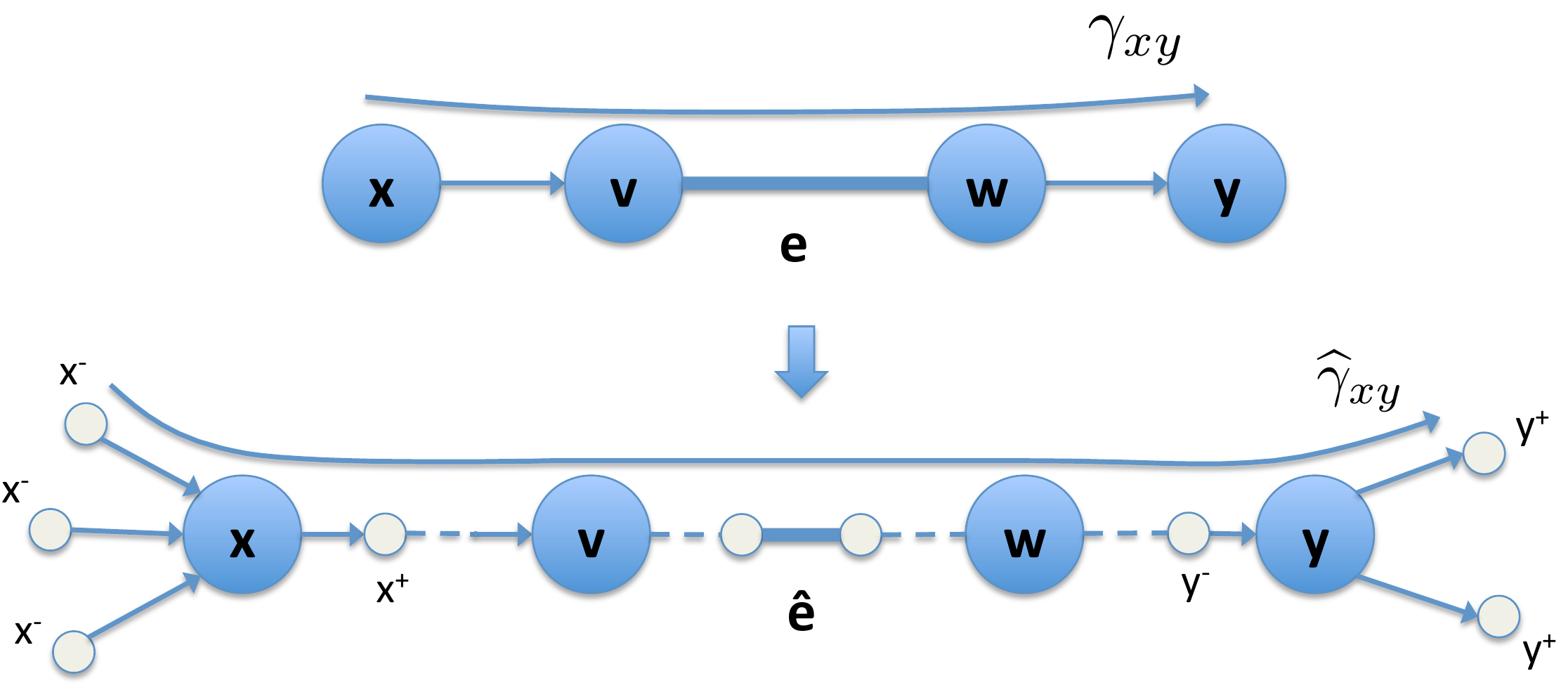}
\end{center}
\caption{\label{fig:pathassociation} (Top) A path $\gamma_{xy}$ in $G$. (Bottom) After adding delays in $\del{G}$, all paths from nodes $\{x^-, x, x^+\}$ towards nodes $\{y^-, y, y^+\}$ are associated with the same path $\gamma_{xy}$. If $e=(v,w)$ was a bottleneck edge in $G$, edge $\del{e}$ in the middle of the delay chain that replaced $e$ will be a bottleneck edge in $\del{G}$.}
\end{figure}

We distinguish the following nine cases. If $x,y$ are compute nodes in $\del{G}$, we associate $\del{\gamma}_{xy} \sim\gamma_{xy}$.  Note that $\abs{\del{\gamma}_{xy}} \leq (B+1) \abs{ \gamma_{xy}}$ when the maximum possible delay per edge is $B$. Next, to consider paths to or from delay nodes, we associate a delay node with the compute node that is closest to it in the direction of the path. Let us use the notation $x^{-}$ to denote delay nodes before $x$ associated with paths through $x$, and $x^{+}$ to denote delay nodes after $x$. For each path $\gamma_{xy}$ of $G$ going through edge $e$, we identify different cases of paths in $\del{G}$ going through $\del{e}$ (the middle edge in the delay chain that replaces $e$). We have eight possibilities:   $x \rightarrow  y^{-}, x \rightarrow y^{+},  x^{-} \rightarrow y^{-}, x^{-} \rightarrow y, x^{-} \rightarrow y^{+},  x^{+} \rightarrow y^{-}, x^{+} \rightarrow y,$ and $x^{+} \rightarrow y^{+}$. 

3. To get a cleaner expression for the bound, assume that $P$ is doubly stochastic. In that case, from \eqref{eq:stationary_distribution_fixed} we see that the stationary distribution of the compute nodes in the presence of delays is $\del{\pi}_x = \frac{\pi_x}{c}$ where $c = \frac{n + \sum_{r} b_{r} p_{r(i)r(j)}}{n}$. Moreover, for all compute nodes $x$, we have $\del{\pi}_x \geq p  \del{\pi}_{x^{-}}$ and  $\del{\pi}_x \geq p \del{\pi}_{x^{+}} $ where $p = \max_{i \neq j} p_{ij}$.

With the above considerations in mind, we start from the definition of the Poincar\'e constant for $\del{G}$:
\begin{align}
\del{K} = \max_{h=(a,b)}  \Big[ \frac{1}{\del{\pi}_a U(a,b) } \sum_{x,y\ s.t. \ h \in \del{\gamma}_{xy}}  \abs{\del{\gamma}_{xy}} \del{\pi}_x \del{\pi}_y \Big].
\end{align}
Let $e=(v,w)$ be a bottleneck edge of $G$. This means that the edge $\del{e}$ in the middle of the delay chain that replaces $e$ will be the bottleneck in $\del{G}$. After some algebra we can bound $\del{K}$ with an expression that involves $K$ (from \eqref{eq:poincareK}). Besides the leading constant involving the bottleneck edge, we need to break the sum over the canonical paths into summands according the nine cases we described in consideration $2$ above. We refer the reader to the appendix for a proof and we state here the final result. 

\begin{thm} \label{thm:poincare}
Let $G$ be a network endowed with a doubly stochastic consensus protocol $P$ and a set of canonical paths $\Gamma$ yielding a Poincar\'e constant $K$. Then adding fixed delays up to $B$ on the edges of $G$ yields a Poincar\'e constant $\del{K}$ for the delay graph $\del{G}$ for which
\begin{align}
\del{K} \leq Z K, \ \ \ Z  = &  \frac{ p_{vw}}{4 c} \Big[p^{2}(2d_{max}^{2} + 3d_{max}+1) B^{3} \notag\\  
&+ p(2pd_{max}^{2} + 2pd_{max} + 8d_{max} + 6) B^{2}  \notag \\
& + (8pd_{max} + p + 8) B + 8\Big],
\end{align}
where $(v,w)$ is a bottleneck edge in $G$, $p = \max_{i \neq j} p_{ij}$, $c = \frac{n + \sum_{r} b_{r} p_{r(i)r(j)}}{n}$ and $d_{max}$ is the maximum degree in the undirected graph $G$ ignoring self-loops.
\end{thm}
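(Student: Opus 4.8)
The plan is to build a set of canonical paths for $\del{G}$ out of the given collection $\Gamma$ and then to bound the defining maximum of $\del{K}$ at the single edge where it is attained. By consideration~1, adding fixed delays leaves the degrees of the compute nodes unchanged and only stretches the paths between them, so the maximizing edge $h=(a,b)$ in the definition of $\del{K}$ lies in the interior of the delay chain that replaces a bottleneck edge $e=(v,w)$ of $G$; write $h=\del{e}=(a,b)$ with $a=d^{vw}_{\lfloor B'/2\rfloor}$, $b=d^{vw}_{\lfloor B'/2\rfloor+1}$, and $B'\le B$. Edges not in the interior of a chain, or near a chain's ends, give a strictly smaller value under the same estimate, so restricting attention to $\del{e}$ costs nothing, and it remains to bound $\frac{1}{\del{\pi}_a U(a,b)}\sum_{x,y:\,\del{e}\in\del{\gamma}_{xy}}\abs{\del{\gamma}_{xy}}\,\del{\pi}_x\,\del{\pi}_y$.

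I would first pin down the leading constant. On a delay chain every node forwards its predecessor's value with weight one, so in the lazy reversibilization $U(\del{P}_{lazy})$ the entry $U(a,b)$ is a fixed constant, and $\del{\pi}_a=\del{\pi}_b$ equals the common stationary value of the chain, which by \eqref{eq:stationary_distribution_fixed} is $\del{\pi}_a=p_{vw}\big/\big(n+\sum_r b_r p_{r(i)r(j)}\big)$, an explicit multiple of $\frac{p_{vw}}{nc}$. Combining the reciprocal of $\del{\pi}_a U(a,b)$ with the normalization $\pi_v=1/n$ that enters the definition \eqref{eq:poincareK} of $K$ for a doubly stochastic protocol is what produces the overall prefactor of $Z$.

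Then the bulk of the work is the path accounting of consideration~2. I would define $\del{\Gamma}$ by decorating each $\gamma_{xy}\in\Gamma$: replace every edge of $\gamma_{xy}$ by its delay chain, and, for endpoints that are delay nodes, prepend a trailing piece of an incoming chain at $x$ (an ``$x^-$'' endpoint) or a leading piece of the outgoing chain at $x$ used by $\gamma_{xy}$ (an ``$x^+$'' endpoint), symmetrically at $y$. Every canonical path of $\del{G}$ through $\del{e}$ is then named by the pair of compute nodes nearest its endpoints in the direction of travel, and either (a) that pair's $G$-path runs through $e$, or (b) one of its endpoints lies on the $e$-chain itself. In case (a) I would sum over the $\del{G}$-paths associated with each $\gamma_{xy}\ni e$, broken into the nine combinations $\{x^-,x,x^+\}\times\{y^-,y,y^+\}$; for each combination I would bound (i) the number of associated $\del{G}$-paths per $\gamma_{xy}$ --- one when the endpoint is the compute node, at most $B\,d_{max}$ for an ``$x^-$'' or ``$y^+$'' endpoint (at most $d_{max}$ incident chains, each with at most $B$ delay nodes), at most $B$ for an ``$x^+$'' or ``$y^-$'' endpoint; (ii) the length, $\abs{\del{\gamma}_{xy}}\le (B+1)\abs{\gamma_{xy}}+2B$; and (iii) the stationary weights, with $\del{\pi}_x=\pi_x/c$ at compute nodes and $\del{\pi}_{x^{\pm}}\le p\,\del{\pi}_x$ from consideration~3 (itself a consequence of \eqref{eq:stationary_distribution_fixed}, since each chain carries stationary mass at most $p$ times that of a compute node). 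Substituting these, factoring out $\sum_{x,y:\,e\in\gamma_{xy}}\abs{\gamma_{xy}}\,\pi_x\pi_y=\pi_v p_{vw}K$, and absorbing the additive $2B$ into $2B\abs{\gamma_{xy}}$, the nine contributions become $K$ times a polynomial in $B$ of degree $3$, the degree coming from (path count)$^2\times$(length) $=O(B^2)\cdot O(B)$; case (b) is treated the same way and contributes at the same order. Collecting the coefficients gives $\del{K}\le ZK$ with the stated $Z$.

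The hardest part will be precisely this bookkeeping: enumerating the nine cases (and the $e$-chain-endpoint case) without double counting, checking that $\del{e}$ actually lies on every path asserted to pass through it (a decorated path whose body is $\del{\gamma}_{xy}$ with $\gamma_{xy}\ni e$ is forced through the whole chain replacing $e$, hence through its middle edge), and tracking the powers of $p$, $d_{max}$, and $B$ tightly enough that neither the additive $2B$ slack nor the boundary subcases --- such as $v=x$ or $w=y$, where part of the $e$-chain also serves as an ``$x^+$'' or ``$y^-$'' chain --- inflate the degree beyond $3$ or break the path identification. The doubly stochastic hypothesis enters only through the clean values $\del{\pi}_x=\pi_x/c$ and $\del{\pi}_a\propto p_{vw}/(nc)$; everything else is routine algebra.
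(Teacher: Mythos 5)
Your plan follows essentially the same route as the paper's appendix proof: restrict the maximization to the middle edge $\del{e}$ of the chain replacing the bottleneck edge $(v,w)$, associate each delayed path with a canonical path of $G$ through its nearest compute-node endpoints, split the sum into the nine cases $\{x^-,x,x^+\}\times\{y^-,y,y^+\}$, use $\del{\pi}_x=\pi_x/c$ and $\del{\pi}_{x^\pm}\le p\,\del{\pi}_x$ together with $\abs{\del{\gamma}_{xy}}\lesssim(B+1)\abs{\gamma_{xy}}$, factor out $\sum_{x,y:\,e\in\gamma_{xy}}\abs{\gamma_{xy}}\pi_x\pi_y$, and collect a cubic polynomial in $B$. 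That is exactly the structure of the paper's argument, and it certainly delivers $\del{K}\le Z'K$ with $Z'=\Theta(B^2)$.

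The gap is in the final claim that your bookkeeping ``gives the stated $Z$.'' The specific coefficients in $Z$ come from two refinements you have replaced with coarser estimates. First, because the bottleneck edge is taken in the \emph{middle} of each chain, the paper attributes only half of each delay chain to a given compute-node endpoint: its sums for $x^-$, $x^+$, $y^-$, $y^+$ run over at most $B_h/2$, $B^+/2$, $B^-/2$, $B_r/2$ nodes (so, e.g., $d_{max}B/2$ paths per $\gamma_{xy}$ for an $x^-$ endpoint, not $d_{max}B$). Second, the length of a decorated path is written as $(B+1)\abs{\gamma_i}+j+k$ with the signed offsets $j,k$ bounded by $B/2$, not as $(B+1)\abs{\gamma_i}+2B$. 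With your counts ($B$ nodes per chain and a $+2B$ slack absorbed into the length), the $x^-\rightarrow y^+$ term alone contributes on the order of $3p^2d_{max}^2B^3$, whereas the stated $Z$ allots it only $p^2d_{max}^2(B^3+B^2)/4$; the same inflation occurs in the other delay-endpoint cases, so the collected polynomial strictly exceeds the one in the theorem. Relatedly, the prefactor $p_{vw}/(4c)$ requires actually evaluating $U(a,b)$ on the middle chain edge of the lazy additive reversibilization (the forward lazy weight $1/2$ averaged with the vanishing reversed entry gives $1/4$), which you leave as ``a fixed constant.'' To prove the theorem as stated you need this half-chain attribution (which also settles your worry about double counting and the $v=x$, $w=y$ boundary cases, since each delay node is charged to exactly one compute node according to which side of the middle edge it lies on) and the explicit value of $U(a,b)$; otherwise what you obtain is the same theorem with a larger constant $Z'$ of the same order.
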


Theorem \ref{thm:poincare} yields a bound in the second eigenvalue and thus the spectral gap of $\del{P}$.

\begin{cor} \label{cor:poincare}
Suppose a doubly stochastic protocol $P$ on a graph $G$ has a spectral gap $1 - \lambda_2(P) \geq \frac{1}{K}$,  and assume that messages over the edges of $G$ experience arbitrary fixed delays of up to $B$ iterations. Then the spectral gap of $\del{P}$ is reduced by at most a factor $\Theta(B^2)$; i.e.,
\begin{align}
1 - \lambda_2(\del{P}) \geq \frac{1}{Z K}, \ \  Z = \Theta(B^2).
\end{align}
\end{cor}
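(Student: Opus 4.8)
The plan is to read the corollary directly off Theorem~\ref{thm:poincare} together with the Poincar\'e bound $\lambda_2 \le 1 - \frac1K$ and Fill's inequality \eqref{eq:aq}. First I would fix a set of canonical paths $\Gamma$ on $G$ realizing the hypothesized spectral gap $1-\lambda_2(P) \ge \frac1K$, and take $\{\del\gamma_{xy}\}$ to be the associated paths in $\del G$ built by the path association of consideration~2 (routing each delayed path through the middle edge $\del e$ of the chain that replaced the bottleneck edge $e$, as in consideration~1). Theorem~\ref{thm:poincare} then says the resulting Poincar\'e constant obeys $\del K \le ZK$. Since $U(\del{P}_{lazy})$ is reversible, strongly aperiodic, and shares the stationary distribution $\del{\vc\pi}$ with $\del P$, the geometric bound applies to it and gives $\lambda_2\bigl(U(\del{P}_{lazy})\bigr) \le 1 - \frac{1}{\del K} \le 1 - \frac{1}{ZK}$, i.e.\ $1 - \lambda_2\bigl(U(\del{P}_{lazy})\bigr) \ge \frac{1}{ZK}$. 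By the discussion around \eqref{eq:aq}, $\lambda_2\bigl(U(\del{P}_{lazy})\bigr)$ is exactly the quantity that governs the convergence rate of $\del P$ (laziness costs at most a factor $2$, and the additive reversibilization is the device that makes a geometric bound available for the non-reversible $\del P$); interpreting $1-\lambda_2(\del P)$ in the corollary as this effective spectral gap yields the displayed inequality.

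It then remains to show $Z = \Theta(B^2)$. The bracketed factor in Theorem~\ref{thm:poincare} is a cubic polynomial in $B$ with strictly positive leading coefficient $p^2(2d_{max}^2+3d_{max}+1)$, so, holding the graph- and protocol-dependent quantities $n$, $d_{max}$, $p$, $p_{vw}$ fixed, the numerator of $Z$ is $\Theta(B^3)$. The reduction to quadratic order comes entirely from the denominator $c = 1 + \frac1n\sum_r b_r p_{i(r)j(r)}$: in the regime the corollary is about --- arbitrary delays up to $B$, whose worst case has delays of order $B$ on the links --- one has $\sum_r b_r p_{i(r)j(r)} = \Theta\bigl(B\sum_{i\ne j}p_{ij}\bigr)$, while $\sum_{i\ne j}p_{ij} = n - \operatorname{tr}(P) = \Theta(n)$ as soon as a constant fraction of each row's mass is off-diagonal (true, e.g., for the max-weight matrix on a graph of bounded $d_{max}$). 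Hence $c = \Theta(B)$ and $Z = \Theta(B^3)/\Theta(B) = \Theta(B^2)$, which combined with $1-\lambda_2(\del P) \ge \frac{1}{ZK}$ finishes the proof.

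The step I expect to be delicate is precisely the $\Theta(B^2)$ claim, since it is not a formal consequence of Theorem~\ref{thm:poincare} on its own: the theorem's dependence on $B$ is cubic, and the drop to $B^2$ relies on the normalizing constant $c$ itself growing linearly in $B$. I would therefore be explicit about which quantities are treated as constants ($G$, hence $n$ and $d_{max}$; $P$, hence $p$ and $p_{vw}$) and in which delay regime the statement is asserted (delays scaling with $B$, the only regime in which ``the convergence rate degrades by $\Theta(B^2)$'' is meaningful --- if all delays stay $O(1)$ there is nothing to prove). The remaining ingredients --- that $U(\del{P}_{lazy})$ inherits $\del{\vc\pi}$ and bounds the convergence rate of $\del P$ up to the factor-$2$ laziness loss, and that the $K$ in Theorem~\ref{thm:poincare} is the same $K$ in the hypothesis because both use the same $\Gamma$ --- are already in place from Section~\ref{sec:fixeddelays}.
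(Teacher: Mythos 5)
Your proposal is correct and follows essentially the same route as the paper's own (one-line) proof: invoke Theorem~\ref{thm:poincare} together with the Poincar\'e bound for the reversibilization $U(\del{P}_{lazy})$ to get $1-\lambda_2 \geq \frac{1}{ZK}$, and then observe that the cubic-in-$B$ bracket divided by $c=\Theta(B)$ gives $Z=\Theta(B^2)$. Your explicit caveat that $c=\Theta(B)$ presupposes delays actually scaling with $B$ (rather than merely being bounded by $B$) is a fair and more careful reading of the step the paper compresses into ``since $b_r\le B$ we see that $c=\Theta(B)$.''
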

\begin{proof}
From Theorem \ref{thm:poincare} we have  $\lambda_2(\del{P}) \leq \lambda_2(U) \leq 1 - \frac{1}{Z K}$. Since $b_r \leq B, r=1,2,\ldots, m$ we see that $c = \frac{n + \sum_{r} b_{r} p_{r(i)r(j)}}{n} =  \Theta(B)$ and thus $Z = \Theta(B^2)$. 
\end{proof}

To the best of our knowledge this is the first result to describe the effect of a bounded fixed delay on the convergence rate of average consensus. It shows that the delays cannot slow down consensus by more than a polynomial factor and convergence remains exponentially fast.

\section{Time Varying Communication Delays}
\label{sec:random}

To capture real network volatility, it is more appropriate to assume that link delays vary randomly with time. In \cite{tsianosAllerton2011}, a discrete-time random delay model is presented. However the construction only applies to uniform consensus weights (i.e., where $P$ is the natural random walk on $G$), and convergence to consensus is only verified in simulation. Here, we generalize the construction of the model from~\cite{tsianosAllerton2011} to use any row-stochastic protocol and we present a formal convergence proof.

\subsection{Random Delay Model}

Similar to the fixed delay model, we add virtual delay nodes. We assume again that delays are finite and upper bounded by a maximum delay $B$.  As emphasized in \cite{tsianosAllerton2011}, with random delays in discrete time we need to be careful. Others have previously analyzed a consensus update of the form
\begin{equation}
x_{i}(t+1) = \sum_{j=1}^{n} p_{ij} x_{j}(t - b_{ij}(t)),
\end{equation}
where $b_{ij}(t)$ is the random delay experienced by link $(i,j)$ at time $t$ \cite{saberDelays,nedicConsensusDelays}. However, this type of update implies that at time $t$ each node $i$ will only receive a single (possibly delayed) message from each neighbour $j$. In practice this may not be true. For example, take an edge $(i,j)$ whose delay could be $1$ or $2$. Assume at iteration $t$ node $i$ sends a message $m_{t}$ to $j$ and at time $t+1$, $i$ sends a new message $m_{t+1}$ to $j$. If $m_{t}$ is delayed by $2$ time units and $m_{t+1}$ is delayed by $1$ unit, then both $m_{t}$ and $m_{t+1}$ will be delivered to node $j$ at time $t+2$. This scenario can easily occur in practice when messages are large in size and receiving a message takes  a non-trivial amount of time during which a second message can arrive. When this happens, the receiving node polling its buffer  experiences the arrival of two messages during the same time slot. 

To model random bounded delays, we replace each directed edge of the original graph with multiple delay chains of varying lengths to model varying amounts of delay.  Every time a message is sent, a random decision is made for which delay chain the message will take to reach its destination\footnote{Of course in reality this random choice is made by the environment, i.e., the network, and is beyond our control. For modeling purposes to emulate and understand the effect of delays, we can draw a random sample from a distribution that we believe resembles how real network conditions fluctuate.}. If a communication network with $n$ computing nodes has  $m$ directed edges (not counting the self loops), each edge delivers messages with some bounded delay that is randomly chosen  between $0$ and $B$. For example for an edge $(i,j)$ with a maximum delay of $3$ we augment $(i,j)$ in $G$ with three parallel delay chains $(d_{1}^{1}), (d_{1}^{2}, d_{2}^{2}), (d_{1}^{3}, d_{2}^{3}, d_{3}^{3})$ in $\del{G}$; see Figure \ref{fig:add_random_delay}. We avoid indexing the delay nodes by edge number to not clutter notation. We augment the graph with $\frac{B(B+1)}{2}$ delay nodes per edge or $b = \frac{m B (B+1)}{2}$ delay nodes total, where $m$ is the number of edges in $G$. We also allow for messages to be delivered without delay, by including the directed edges $(i,j)$ of the original graph $G$. 
\begin{figure}
\begin{center}
\includegraphics[width=2.8in]{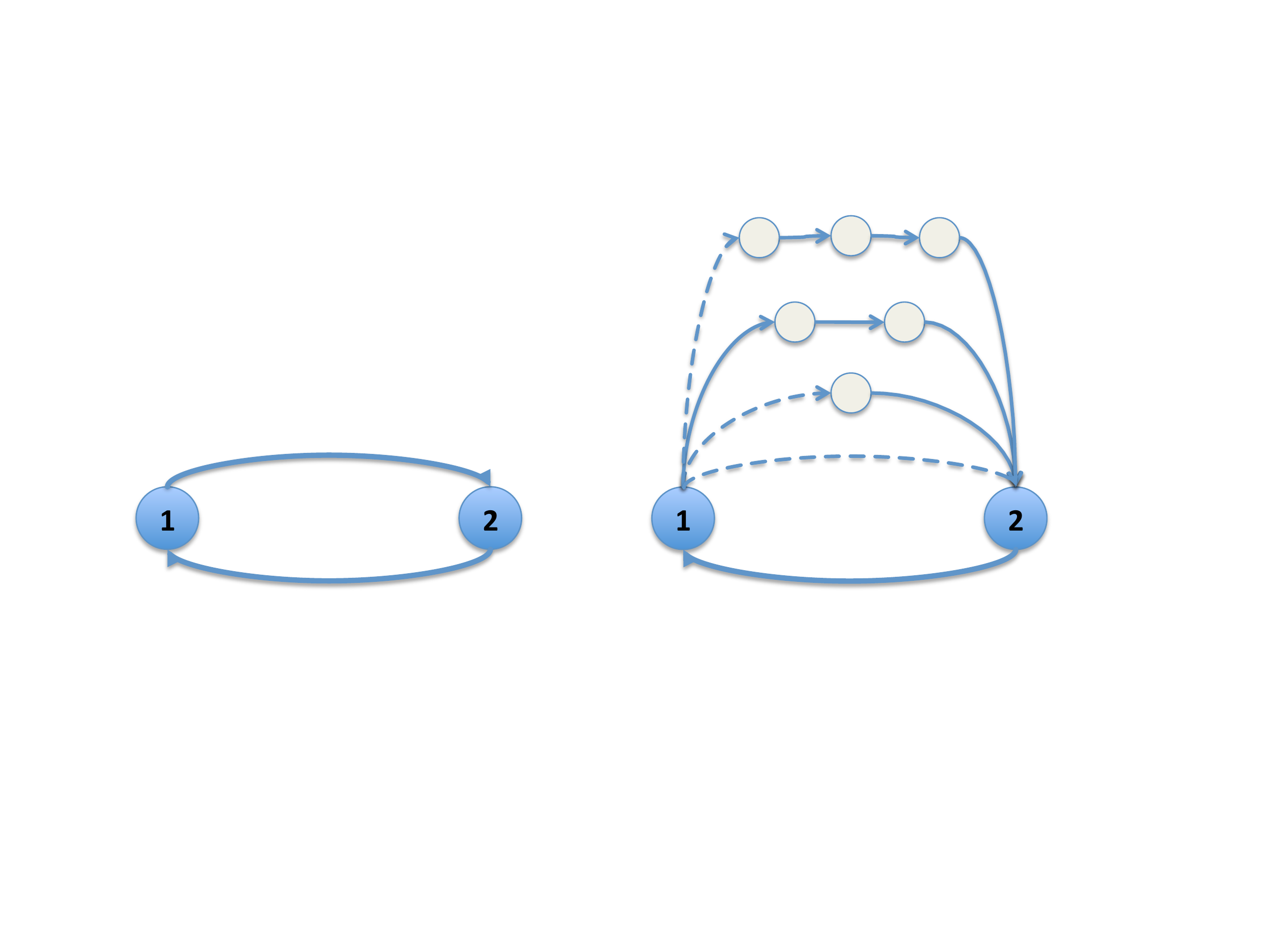}
\end{center}
\caption{\label{fig:add_random_delay} Adding a random bounded delay on edge $(1,2)$. At this particular instant, $1$ sends with delay  $2$ since the connections to delays $1$ and $3$ are deactivated.}
\end{figure}

Our goal is to write a matrix $\del{P}(t)$ that will describe the consensus dynamics under random delays using linear updates. Our previous work~\cite{tsianosAllerton2011} presented a model for the simple case where all incoming messages receive equal weight (proportional to the number of neighbors). To address the general case, we assume here that we are given a row stochastic protocol $P$ for the graph $G$, and we construct $\del{P}(t)$ using the weights suggested by $P$.

Every time a message is sent, it is routed randomly through one of the $B$ delay chains or the direct edge with zero delay. Outgoing edges to the other chains leading to the same recipient are cut off. Here we consider a time-varying delay model where each message experiences a delay that is i.i.d. from delays on other messages on different edges and different time moments. For more accurate modelling, we can impose any discrete probability distribution on the integers $0,\ldots,B$ to control the expected delay of an edge. This does not effect the convergence analysis presented below.

As we see, the augmented graph topology changes at every iteration based on which outgoing edges to delay chains are active. To describe the consensus update equations we need to model the changing topology. At each iteration, a delay is sampled for each message to be transmitted. Based on these delays, at iteration $t$ the graph adjacency matrix $A(t)$ is a sample from the set $\{A^{1}, \ldots, A^{(B+1)^{m}}\}$ of possible adjacency matrices.  Notice that a delay node could either contain a message or be empty, and a zero message is not the same as the node being empty. To keep track of which delay nodes are empty we define an indicator vector sequence $\{\phi(t)\}_{t=1}^{\infty}, \phi(t) \in \{0,1\}^{b}$. Using $A(t)$ and $\phi(t)$ we show how to write a transition matrix $\del{P}(t)$ at each iteration $t$.

We begin by noticing that adjacency matrices $A(t)$  have the structure
\begin{equation}
A(t) = \begin{bmatrix} \label{eq:randomdelayA}
	  I_{n \times n} + L(t)  &             J_{n\times b} \\
	   R(t)                  & C_{b \times b}
\end{bmatrix}.
\end{equation}
Matrix $A(t)$ should be interpreted as a directed graph adjacency matrix. Element $[A(t)]_{ij}$ is $1$ if there is a directed link from $j$ to $i$. Its constituent parts $L(t)$, $J_{n\times b}$, $R(t)$, and $C_{b \times b}$ are described next.

The upper left block is an identity matrix to represent the self-loops  plus a random $n \times n$ square matrix $L(t)$ with zeros on the diagonal and a one at position $(i,j)$ if compute node $j$ sends a message to compute node $i$ with zero delay\footnote{Note that zero delay means that a message sent at iteration $t$ will be delivered at iteration $t+1$, i.e., without any delay.} at iteration $t$. Matrix $R(t)$ is $b \times n$ and is also a random matrix. Whenever a compute node $i$ transmits to another compute node $j$ using delay chain $r =1,\ldots,B$, matrix $R(t)$ will encode that random delay choice for time $t$. For example, if at time $t$ node $j$ sends a message to $i$ which is delayed by $2$ steps (so that it will arrive at time $t+3$), $R(t)$ will contain a block for edge $(j,i)$ indicating the delay chain that is active, as illustrated in equation~\eqref{eqn:Rexample}.
\begin{equation}
R(t) = \kbordermatrix{
	~ 	&	1  	&	\cdots  	&	j  		& 	\cdots 	  & n  \cr
 \vdots 	& \vdots    &	        		&	\vdots 	&   			  & \vdots \cr
\cdashline{2-6}	 	
 d_{1}^{1} 	&	0	& \cdots	 	&	0		&	\cdots	  &	0 \cr
  d_{1}^{2} &	0	& \cdots	 	&	1		&	\cdots	&	0 \cr
 d_{2}^{2} 	&	0	& \cdots	 	&	0		&	\cdots	&	0 \cr
 d_{1}^{3} 	&	0	& \cdots		&	0		&	\cdots	 &	0 \cr
  d_{2}^{3} &	0	& \cdots		&	0		&	\cdots	&	0 \cr
 d_{3}^{3} 	&	0	& \cdots		&	0		&	\cdots	&	0 \cr
\cdashline{2-6}	
\vdots	&	 \vdots 	&	 	&	\vdots  &	  &	\vdots \cr}. \label{eqn:Rexample}
\end{equation}
Element $(d_1^2, j)$ of $R(t)$ is $1$ since $j$ will transmit to the first delay node in the chain of length $2$ towards $i$. The entries that are not shown within each block are all zero. 

Matrix $J_{n \times b}$ describes the connections between  the delay nodes $d_{r}^{r}$ at the end of each delay chain delivering messages to the compute nodes.  The part of $J_{n\times b}$ corresponding to the edge $(j,i)$ of $R(t)$ just discussed will look like
\begin{equation}
J_{n \times b}^T = \kbordermatrix{
	~                & 1 		& \cdots & j 	& \cdots & n \cr
	\vdots        &\vdots  	 & 	    & \vdots        & 	    &  \vdots \cr
	\cdashline{2-6}	
	 d_{1}^{1} & 0 & \cdots & 1       & \cdots  & 0 \cr
	 d_{1}^{2} & 0 & \cdots & 0       & \cdots  & 0 \cr
	 d_{2}^{2} & 0 & \cdots & 1       & \cdots  & 0 \cr
	 d_{1}^{3} & 0 & \cdots & 0       & \cdots  & 0 \cr
	 d_{2}^{3} & 0 & \cdots & 0       & \cdots  & 0 \cr
	 d_{3}^{3} & 0 & \cdots & 1       & \cdots  & 0 \cr
	 \cdashline{2-6}	
	\vdots        &\vdots  	 & 	    & \vdots        & 	    &   \vdots \cr
}.
\end{equation}
I.e., for edge $j \rightarrow i$, the entries $(j, d_1^1), (j, d_2^2)$ and $(j, d_3^3)$ in $A(t)$ are all $1$. Finally, we define the matrix $C_{b \times b}$ for forwarding messages from one delay node to the next on each chain. On a specific delay chain of length $h$, messages are forwarded through the action of an $h \times h$ Toeplitz forward shift matrix with $1$s on the first lower diagonal, i.e.,
\begin{equation}
S_{h} = \begin{bmatrix}
	0 		& 0 	& \cdots 	& 0	&  0	& 0 \\
	1 		& 0	&  		&  	&   	& 0 \\
	0 		& 1	&  		&     	&	& 0 \\
	\vdots 	& 	& \ddots 	& 	&	& \vdots \\
	0 		& 	&  		& 1	& 0   &  0 \\
	0 		& 0 	& \cdots 	&  0	& 1  	& 0 \\
\end{bmatrix}.
\end{equation}
For any edge $r=1,\ldots,m$, to forward messages through all delay chains we use a block diagonal matrix $K_{r} = \operatorname{diag} (S_{1}, S_{2}, \ldots, S_{B})$. Finally, since we have $m$ edges
\begin{equation}
C_{b \times b} = \operatorname{diag}(K_{1}, K_{2}, \ldots, K_{m}).
\end{equation}
Looking back at \eqref{eq:randomdelayA}, observe that every row of 
$[R(k) \ \ C_{b \times b}]$ contains at most one non-zero element and there are rows that are all zero.

Next, we define an indicator vector $\phi(t) \in \{0,1\}^{b}$  that keeps track of whether a delay node on any delay chain contains a message or is empty. Initially we have $\phi(0) = \mathbf{0}_{b}$. At iteration $t$, the first nodes in the delay chains may receive new information depending on which edges are activated by $R(t)$. The rest of the delay nodes will be non-empty only if their predecessors in the chains were non empty in the previous iteration. In other words, $\phi(t)$ evolves as
\begin{equation} \label{eq:phi}
\phi(t) = R(t) \mathbf{1}_{n} + C_{b \times b} \phi(t-1).
\end{equation}

After understanding the structure of the time-varying adjacency matrices $A(t)$, to describe the consensus transition matrices $\del{P}(t)$ we need to specify the weights used to combine incoming messages. Recall that each computing node might receive multiple messages from a neighbouring computing node, each arriving via a different delay chain. We will assign equal weights to all incoming messages from the same sender, and messages from different senders will receive weights according to $P$. For example, suppose compute node $i$ receives $\del{w}_{ij} + L_{ij}(t)$ messages from node $j$ where $0 \leq \del{w}_{ij} \leq B$ are the delayed messages and $L_{ij}(t) = 0$ or $1$ is a message without delay. Node $i$ will assign a weight $\frac{p_{ij}}{\del{w}_{ij} + L_{ij}(t)}$ to each of those messages. In this setting, the self-loop message from $i$ to itself will take weight $p_{ii} + \sum_{k=1}^n \Large{\mathds{1}}[w_{ik} + L_{ik}(t)=0] p_{ik}$ where the sum is over all neighboring nodes $k$ from which $i$ does not receive anything at iteration $t$. Define $\Phi(t) = \diag(\phi(t))$. We can determine which delay nodes at the ends of delay chains have information to be delivered by taking the product $J_{n \times b} \Phi(t-1)$ and locating which entries are $1$. Thus, to construct $\del{P}$ we locate all the entries equal to $1$ in matrix $J_{n \times b} \Phi(t-1)$ at row $i$ and columns corresponding to deliveries from $j$, and replace them by $\frac{p_{ij}}{\del{w}_{ij} + L_{ij}(t)}$. If $L_{ij}(t) = 1$ we also need to replace that entry with $\frac{p_{ij}}{\del{w}_{ij} + L_{ij}(t)}$. With a slight abuse of notation let us describe with $\bar{P}[L(t)]$ and $\bar{P}[\phi(t-1)]$ the operators that  replace the $1$s  in $L(t)$ and $J_{n \times b} \Phi(t-1)$ respectively with weights using $P$. If node $i$ receives no messages from neighbour $j$, then $\del{w}_{ij} + L_{ij}(t) = 0$ and we transfer the weight $p_{ij}$ to the self-loop message of $i$. The transition matrix $\del{P}(t)$ is now written as
\begin{align}
&\del{P}(t) =  \begin{bmatrix} \label{eq:qt_general}
	\del{P}_{1,1}(t) &  \bar{P}[\phi(t-1)] 	\\
            R(t) & C_{b \times b}
\end{bmatrix} \\
&\del{P}_{1,1}(t) =  I - \diag(\bar{P}[\phi(t-1)] \vc{1}_b + \bar{P}[L(t)] \vc{1}_n ) + \bar{P}[L(t)]. 
\end{align}
The upper left block of $\del{P}(t)$ has this form since for any row stochastic matrix $P$, we have $p_{ii} + \sum_{k=1}^n \Large{\mathds{1}}[\del{w}_{ik} + L_{ik}(t)=0] p_{ik} = 1 -  \sum_{k=1}^n \Large{\mathds{1}}[\del{w}_{ik} + L_{ik}(t) > 0] p_{ik}$ for each compute node $i$. This is just another way of saying that the portion of the weight not used on incoming messages at compute node $i$ from other neighbours is reassigned to the self loop message.

Observe that the rows of $\del{P}(t)$ either sum to zero or to one. Each row $i$ for $i \leq n$ (corresponding to a compute node) is stochastic by construction, while each row $i$ for $n < i \leq n+b$ (corresponding to a delay node) contains at most a single $1$ and all other elements are $0$. A row $i > n$ corresponding to a delay node $d_{1}^{r}$ will be a zero row if the compute node at the source of the corresponding edge did not send a message through the delay chain $r$. Let $\del{\vc{x}}(t) \in \mathds{R}^{n + b}$ denote the augmented state vector of compute and delay nodes. The consensus update equations using $\del{P}(t)$ are now
\begin{align}  \label{eq:randomdelay_consensus}
\del{\vc{x}}(t+1) = & \del{P}(t+1) \del{\vc{x}}(t), \ \ t \geq 0
\end{align}
where to construct $\del{P}(t+1)$ we need to first update the vector $\phi(t)$ according to \eqref{eq:phi}.

The presence of zero rows makes the transition matrices $\del{P}(t)$ not stochastic so we need a convergence proof specific to this family of matrices. As we see later, one advantage of Push-Sum consensus is that it simplifies the random delay model and we do not have this complication.

\subsection{Convergence under Random Delays}

We can show convergence of the random delay update model \eqref{eq:randomdelay_consensus} by inspecting the fundamental properties of the matrices $\{\del{P}(t) \colon t=1,2,\dots\}$. First we need two standard definitions \cite{contractingMatrices}:
\begin{dfn}
A square matrix $M$ is non-expansive with respect to a norm $\norm{\cdot}$ if for any vector $\vc{x}$, we have $\norm{M \vc{x}} \leq \norm{\vc{x}} $.
\end{dfn}
\begin{dfn}
A square matrix $M$ is paracontracting with respect to a norm $\norm{\cdot}$ if for any vector $\vc{x}$, we have $\norm{M \vc{x}} < \norm{\vc{x}}$ whenever $M \vc{x} \neq \vc{x}$.
\end{dfn}

From the construction of the random delay matrices, it is easy to see that the graphs represented by the adjacency matrices $A(t)$ are all connected, and in addition, every compute node performs an averaging operation of the incoming messages. We can thus show that the product of sufficiently many consecutive matrices $\del{P}(t)$ is a contractive mapping, leading to convergence.

\begin{thm} \label{thm:random_delay_convergence} The product $\del{P}_{2B+1}(t) = \prod_{s=0}^{2B} \del{P}(t+s)$ of $2B+1$ consecutive random delay matrices is non-expansive with respect to the infinity norms $\norm{\cdot}_{\infty}$ and $\norm{\cdot}_{-\infty}$. Moreover, for some integer $r \geq 1$ that depends on the network topology, the product  $\del{P}_{r(2B+1)}(t) $ is paracontracting. As a result, every non-empty node $i$ such that $1 \leq i \leq n+b$  and $\phi_i(t) > 0$  converges almost surely to the same value; i.e. $\del{x}_i(t) \rightarrow  v$  as $t \rightarrow \infty$.
\end{thm}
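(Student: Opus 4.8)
The plan is to fit the sequence $\{\del{P}(t)\}$ into the framework of infinite products of paracontracting maps \cite{contractingMatrices}. Concretely I would establish three things: (i) every single matrix $\del{P}(t)$, and hence every finite product, in particular each block $\del{P}_{2B+1}(t)$, is non-expansive with respect to $\norm{\cdot}_{\infty}$ and $\norm{\cdot}_{-\infty}$ when these are read off the currently non-empty coordinates; (ii) a block of length $r(2B+1)$, with $r=\operatorname{diam}(G)$, is paracontracting with respect to both of these, its fixed-point set being the vectors that are constant on the non-empty coordinates; and (iii) since $A(t)$ and $\phi(t)$ — and therefore $\del{P}(t)$ — take only finitely many values, the standard convergence theorem for products of paracontractions drawn from a finite set applies, giving $\del{\vc{x}}(t)\to v\vc{1}$ on the non-empty coordinates. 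Steps (i) and (iii) are essentially bookkeeping; the whole difficulty is concentrated in step (ii).

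For step (i): by construction the rows of $\del{P}(t)$ indexed by compute nodes are stochastic — a convex combination of the node's own value and the messages it receives, with the weight of any not-received neighbour reassigned to the self-loop, exactly as in \eqref{eq:qt_general} — while the rows indexed by delay nodes either vanish or contain a single $1$ (pure forwarding via $C_{b\times b}$ or injection via $R(t)$). Non-negativity together with row sums $\le 1$ already yields $\norm{\del{P}(t)\vc{x}}_{\infty}\le\norm{\vc{x}}_{\infty}$. For the lower envelope one invokes the bookkeeping \eqref{eq:phi}: a node is non-empty at time $t+1$ only if it copies (delay node) or averages (compute node) values held at non-empty nodes at time $t$; hence the maximum over non-empty coordinates is non-increasing and the minimum over non-empty coordinates is non-decreasing. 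The empty delay nodes carry the meaningless value $0$ and must be excluded throughout — which is exactly why the conclusion is phrased only for non-empty nodes.

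For step (ii) the object to produce is a constant $\eta=\eta(P,G,B)>0$ such that, for \emph{every} realization of the delays and \emph{every} starting time $t$, the entry of $\del{P}_{r(2B+1)}(t)$ from any compute node to any other compute node is at least $\eta$. I would build this one edge of $G$ at a time. Fix an edge of $G$ and a window of $2B+1$ consecutive steps starting at some $t'$; at the first step the sending compute node transmits along some (realization-dependent) delay chain of length $b\le B$, so there is a walk in the time-expanded graph that rides that chain and is incorporated by the receiving node at time $t'+b+1$ with weight at least $p_{\min}/(B+1)$ ($p_{\min}$ the least positive entry of $P$; at most $B+1$ copies of any one neighbour's message), and the remaining $2B-b<2B+1$ steps are spent on the self-loop, contributing a factor at least $p_{\min}^{2B}$ when $P$ has positive diagonal. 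Thus each compute-to-neighbour entry of $\del{P}_{2B+1}(t')$ is $\ge\eta_1:=p_{\min}^{2B+1}/(B+1)$ for every $t'$; composing $r=\operatorname{diam}(G)$ such blocks along a shortest path between any two compute nodes (multiplying matrix entries, which sidesteps having to track physical values) gives $\eta\ge\eta_1^{\,r}$. With every compute-to-compute entry of $\del{P}_{r(2B+1)}(t)$ at least $\eta$, the maximiser at the end of the block puts weight $\ge\eta$ on the value of the minimiser at the block's start and vice versa, so the spread $\max_i\del{x}_i-\min_i\del{x}_i$ over non-empty coordinates is multiplied by at most $1-\eta$ across each block, strictly so unless the state is already constant on the non-empty coordinates — i.e. $\del{P}_{r(2B+1)}(t)$ is paracontracting with the claimed fixed-point set, and combining the monotone-max/monotone-min envelopes from (i) with a geometrically vanishing spread forces $\del{x}_i(t)\to v$ for every node non-empty at time $t$.

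I expect step (ii), and specifically the uniform lower bound on the influence weights, to be the main obstacle. Two features make it delicate: arrival times are random, so one cannot follow a single chosen path but must argue about a product over a window of fixed length (the device above of lower-bounding matrix entries by a single good walk per $2B+1$-window is what makes this tractable); and if $P$ has a zero diagonal entry the self-loop weight at a node can momentarily vanish — exactly when that node receives from all of its in-neighbours — so one must either exploit the reassignment rule (unused neighbour weight flows to the self-loop, so a received value is never completely annihilated before the window closes) or reroute the carrying walk through a short cycle of $G$, at the cost of a slightly larger exponent in $\eta_1$. The remaining ingredients — non-expansiveness, finiteness of the matrix family, and the final appeal to the paracontraction convergence theorem — are routine. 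It is worth noting that the convergence obtained this way is in fact deterministic: it holds along every sample path of the bounded-delay process, so the ``almost surely'' in the statement may be read as ``surely,'' and only the limiting value $v$ is random.
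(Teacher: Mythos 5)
Your proposal follows the same skeleton as the paper's proof---subsample in blocks of $2B+1$, show that the maximum over non-empty coordinates is non-increasing and the minimum non-decreasing, then show that a block of length $r(2B+1)$ strictly contracts---but you execute the key paracontraction step quite differently. The paper argues qualitatively: it tracks a node holding the maximum, observes that within $B+1$ iterations it must receive a neighbour's message and that the maximum can survive a block only if it has already spread over a whole neighbourhood of some maximiser, and lets $r$ be determined by how far the maximum has propagated; no modulus of contraction is produced, and the paper explicitly states that this analysis gives no rate. You instead fix $r=\operatorname{diam}(G)$ and prove a uniform lower bound of order $\bigl(p_{\min}^{2B+1}/(B+1)\bigr)^{r}$ on every compute-to-compute entry of the block product by exhibiting, in each $2B+1$ window, one walk that rides a delay chain and then sits on the self-loop; this yields a genuine $(1-\eta)$ contraction of the spread per block and hence a geometric rate, which is strictly more than the paper obtains. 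Your worry about zero diagonal entries is legitimate but moot in this model: the self-loop message is always delivered, and the proof of Lemma~\ref{lem:match_phi_M2} already relies on $M_1(t)$ having non-zero diagonal, so $p_{ii}>0$ is implicitly assumed and the self-loop leg of your walk always carries weight at least $p_{ii}\geq p_{\min}$. Your closing observation that convergence is pathwise (sure, not merely almost sure), with only the limit $v$ random, is consistent with---indeed sharper than---the theorem as stated.

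One step you should tighten: the $(1-\eta)$ contraction of the spread over \emph{non-empty} coordinates does not follow immediately from the compute-to-compute entry bound, because at a block boundary the maximum or minimum can be attained by a delay node (an in-flight message), whose row or column in the block product is not covered by your bound. This is easily repaired---for instance, measure the spread over compute nodes at block boundaries and absorb in-flight messages by noting that any non-empty delay node's value equals some compute node's value from at most $B$ steps earlier, or simply lengthen each block by $B$ so that messages in flight at the start of a block are delivered and averaged inside it---but it needs to be said; the paper's own argument glosses over the same point when it allows delay nodes to carry the extreme values.
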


\begin{proof} Consider the linear random delayed consensus updates subsampled at intervals of $2B+1$ iterations:
\begin{equation}
\del{\vc{x}}(t) = \del{P}_{2B+1}(t) \del{\vc{x}}(t-1), \ \ t=1,2,\ldots
\end{equation}
Recall that in parallel to $\del{\vc{x}}(t)$ we have to evolve the vector $\vc{\phi}(t)$ which indicates which delay nodes are empty. To focus on the non-empty nodes,  define the vector $\vc{y}(t)$ such that $y_i(t) = \del{x}_(t)$ if $\phi_i(t) > 0$ and $y_i(t) = - \infty$ if $\phi_i(t) = 0$.

Let us  observe that the maximum value of $\vc{y}(t)$ is either equal to or smaller than the maximum value of $\vc{y}(t-1)$. If a compute node $i \leq n$ holds the maximum value of $\vc{y}(t-1)$, in $B+1$ iterations it is certain that $i$ will receive a message from a neighbouring compute node $j \leq n$. If at least one neighbour of $i$ has a smaller value than $i$, then the value of $i$ will be reduced because $i$ will set its new value to a convex combination of the more than one incoming messages (including the self message). However, $i$ may send its (maximum) value to a node $k \leq n$ through the delay chain of length $B$ at iteration $t$. Regardless of whether the value at $i$ is reduced or not, the maximum of $\vc{y}(t-1)$ will not change while it is traversing the delay chain towards $k$. When the message reaches $k$, node $k$'s value will be reduced unless all of its neighbours have sent messages to $k$ equal to the maximum. To summarize, the maximum value of $\vc{y}(t-1)$ after $2B+1$ iterations will either stay the same or be reduced. The maximum value will not change if multiple nodes  hold that value and there exist at least one node with no neighbours that contain a smaller value. As a result, the maximum value of the state vector will certainly be reduced after $r(2B+1)$ where $r=1,2,\ldots$ is defined as follows. Assume a node $i$ holds the maximum value of $\vc{y}(t-1)$. If at least one neighbour of $i$ holds a smaller value, then $r=1$. If all nodes in the distance $1$ neighbourhood $N^{1}(i)$ of $i$ also contain the maximum value then $r=2$. If the neighbours of the neighbours $N^{2}(i) = N^{1}(N^{1}(i))$ of $i$ contain the maximum value then $r=3$ and so on. Notice also that if  the delay nodes were real nodes initialized with random values such that a delay node contained the maximum value in $\vc{y}(t-1)$, then that value would reach a compute node and would be reduced via an averaging update in at most $B+1$ iterations. We have shown that $\del{P}_{2B+1}(t)$ is non-expansive with respect to $\norm{\cdot}_{\infty}$. Similarly, since averaging a set of numbers increases the smallest number in the set, $\del{P}_{2B+1}(t)$ is also non-expansive with respect to $\norm{\cdot}_{-\infty}$ if we define $\vc{y}'(t)$ so that $y_i'(t) = + \infty$ if $\phi_i(t) = 0$. Moreover, for a given network, we have shown that there exists an integer $r$ such that $\del{P}_{r(2B+1)}(t)$ certainly reduces the maximum value of $\vc{y}(t-1)$ and increases the minimum value of $\vc{y}'(t-1)$. In other words, every product $\del{P}_{r(2B+1)}(t)$  is paracontracting and thus every $r(2B+1)$ iterations the minimum and maximum values in the graph come close together and thus must converge to the same limit $v \in \mathds{R}$.
\end{proof}

Even though Theorem \ref{thm:random_delay_convergence} establishes convergence to consensus under random delays, the actual consensus value $v$ is difficult to characterize since it depends on the specific realization of the process---i.e., on the random matrices $\del{P}(t)$ used at every iteration. As future work, it might be possible to extend the results of~\cite{AsymptoticConsensusVal} to describe the statistics of $v$, however the extension is non-trivial since their results are based on the assumption that all the involved matrices do not have zeros in the diagonal which is not the case in our model. Here, we show that, as one might expect, $v$ is a convex combination of the initial conditions. We achieve this by showing that the top left $n \times n$ submatrix of $\del{P}(t)$ is a row stochastic matrix for all $t$.

After $t+1$ steps we have
\begin{align}
\del{\vc{x}}(t+1) = \del{P}(t+1) \del{P}(t) \cdots \del{P}(1) \del{\vc{x}}(0).
\end{align}
The product $\prod_{k=1}^{t} \del{P}(k)$ is a matrix with block structure
\begin{equation}
\prod_{k=1}^{t} \del{P}(k) = M(t) = \begin{bmatrix} \label{eq:definition_of_Mt}
	   M_{1}(t) & M_{3}(t)   \\[0.3em]
           M_{2}(t) & M_{4}(t)
\end{bmatrix} 
\end{equation}
where matrix $M_{1}(t)$ is  $n \times n$  and $M_{2}(t)$ is $b \times n$. So we have
\begin{align}
\del{\vc{x}}(t+1) =  &\del{P}(t+1) M(t)  \del{\vc{x}}(0) \notag \\
= &\begin{bmatrix}
	  \del{P}_{1,1}(t+1) & \bar{P}[\phi(t)]     \\
           R(t+1) & C_{b \times b}
\end{bmatrix}
 \begin{bmatrix}
	   M_{1}(t) & M_{3}(t)   \\[0.3em]
           M_{2}(t) & M_{4}(t)
 \end{bmatrix} \del{\vc{x}}(0).
\end{align}
From the last equation, we obtain two recursions 
\begin{align}
M_{1}(t+1) = & \Big(  I_{n \times n} - \diag\big(\bar{P}[\phi(t)] \vc{1}_b + \bar{P}[L(t+1)] \vc{1}_n\big) \notag \\
 & +\bar{P}[L(t+1)] \Big) M_{1}(t) + \bar{P}[\phi(t)] M_{2}(t) \label{eq:M1}  \\ 
M_{2}(t+1) = & R(t+1) M_{1}(t) + C_{b \times b} M_{2}(t). \label{eq:M2}
\end{align}
We will show that  $M_{1}(t)$ is row stochastic for all $t$ and that it converges to a rank-$1$ matrix. We begin by proving some intermediate lemmas and then proceed with the proof of the main theorem.

\begin{lemma} \label{lem:match_phi_M2}
For all $t$,  $M_{2}(t)$ and $\phi(t)$ have non-zero rows in exactly the same positions.
\end{lemma}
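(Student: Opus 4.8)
I would argue by induction on $t$, using the two recursions \eqref{eq:M2} and \eqref{eq:phi}, i.e.\ $M_2(t+1) = R(t+1) M_1(t) + C_{b\times b} M_2(t)$ and $\phi(t+1) = R(t+1)\vc{1}_n + C_{b\times b}\phi(t)$, together with three structural facts. First, every matrix in sight ($\del{P}(k)$, hence the blocks $M_1(t)$ and $M_2(t)$, as well as $R(t)$ and $C_{b\times b}$) has non-negative entries, so adding two non-negative contributions produces no cancellation. Second, by construction each row of $R(t+1)$ has at most one non-zero entry, equal to $1$, sitting in the column of the compute node that feeds the chain containing the given delay node; in particular a row of $R(t+1)$ is non-zero exactly when $[R(t+1)\vc{1}_n]_i\neq 0$. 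Third, each row of $C_{b\times b}$ has at most one non-zero entry, equal to $1$, in the column of the immediate predecessor of that delay node on its chain, and is the zero row exactly when the delay node is at the head of a chain. The base case is immediate, since $M(0)=I_{n+b}$ gives $M_2(0)=\mathbf{0}_{b\times n}$ and $\phi(0)=\mathbf{0}_b$, so neither has a non-zero row.

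For the inductive step I would assume $[M_2(t)]_{i,:}\neq\vc{0}$ iff $\phi_i(t)\neq 0$ for every delay index $i$, fix such an $i$, and let $i'$ be its predecessor on the chain. By \eqref{eq:M2},
\[
[M_2(t+1)]_{i,:} = [R(t+1)]_{i,:}\,M_1(t) + [C_{b\times b}]_{i,:}\,M_2(t).
\]
By the third fact and the inductive hypothesis, the second summand is either $\vc{0}$ or the row $[M_2(t)]_{i',:}$, which is non-zero exactly when $\phi_{i'}(t)\neq 0$. By the second fact, the first summand is either $\vc{0}$ or the row $[M_1(t)]_{j,:}$ indexed by the feeding compute node $j$, and the latter occurs exactly when row $i$ of $R(t+1)$ is non-zero. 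Provided $M_1(t)$ has no zero row we have $[M_1(t)]_{j,:}\neq\vc{0}$, so by non-cancellation $[M_2(t+1)]_{i,:}\neq\vc{0}$ precisely when row $i$ of $R(t+1)$ is non-zero or $\phi_{i'}(t)\neq 0$, which by \eqref{eq:phi} is exactly $\phi_i(t+1)\neq 0$. This closes the induction.

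The step that needs care, and the main obstacle, is the auxiliary fact that $M_1(t)$ never has a zero row. If the row-stochasticity of $M_1(t)$ is established first this is free; otherwise I would fold it into the same induction. The point is that every compute node keeps a residual self-loop weight of at least $p_{ii}$, so the $(i,i)$ entry of $\del{P}_{1,1}(k)$ is bounded below by $p_{ii}$, and combining this with non-negativity in \eqref{eq:M1} gives $[M_1(t)]_{ii}\ge p_{ii}^{\,t}>0$ whenever $p_{ii}>0$ (which holds for the standard self-loop consensus weights). If one wishes to avoid any assumption on the diagonal of $P$, the only residual case is when the $\del{P}_{1,1}$ part of row $i$ of $M_1(t+1)$ vanishes; but then node $i$ is receiving a delayed message, so the $\bar{P}[\phi(t)]$ block contributes a positive multiple of a row of $M_2(t)$ that the inductive hypothesis certifies is non-zero, and non-cancellation again yields $[M_1(t+1)]_{i,:}\neq\vc{0}$.
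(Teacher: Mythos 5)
Your proof is correct and follows essentially the same route as the paper: induction on $t$ via the coupled recursions \eqref{eq:M2} and \eqref{eq:phi}, using the shift structure of $C_{b\times b}$, the single-$1$ rows of $R(t+1)$, non-negativity, and a non-degeneracy property of $M_1(t)$. Your treatment of that last point is in fact more careful than the paper's, which simply asserts that $M_1(t)$ has non-zero diagonal entries ``by the update equation \eqref{eq:M1}''---an assertion that implicitly needs $p_{ii}>0$ (or a fallback like your argument through the $\bar{P}[\phi(t)]\,M_2(t)$ term), which is exactly the subtlety you flag and close.
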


\begin{proof}
We will proceed inductively, using the expressions for how $M_{2}(t)$ and $\phi(t)$ evolve. We have $\phi(1) = R(1) \mathbf{1}_n  + C_{b \times b} \phi(0)  =  R(1) \mathbf{1}_n$ and $M_{2}(1) = R(1)$ so clearly the non-zero rows of $R(1)$ are the non-zero rows of $M_{2}(1)$, and they also result in non-zero entries of $\phi(1)$. For the inductive step, let us assume that $\phi(t)$ and $M_{2}(t)$ have non-zero rows in the same positions. At step $t+1$ we have $\phi(t+1) = R(t+1)\mathbf{1}_n + C_{b \times b} \phi(t)$ and $M_{2}(t+1) = R(t+1) M_{1}(t)  + C_{b \times b} M_{2}(t) $. If row $i$ of $\phi(t)$ and $M_{2}(t)$ is non-zero, then due to multiplication by the shift matrix $C_{b \times b}$, row $i+1$ of $\phi(t+1)$ and $M_{2}(t+1)$ will be non-zero. Moreover, if a row $i$ of $R(t+1)$ is non-zero then obviously row $i$ of $\phi(t+1)$ will be non-zero. For $M_{2}(t+1)$, we look at the term $R(t+1) M_{1}(t) $. Observe that  $M_{1}(t)$ has non-zero diagonal entries for all $t$. This is easy to see by the update equation \eqref{eq:M1} for $M_{1}(t)$. As a result, the product $R(t+1) M_{1}(t) $ will yield non-zero rows of $M_{2}(t+1)$ wherever a row of $R(t+1)$ is non-zero. This completes the inductive step of the proof.
\end{proof}

The next two lemmas are also inductive, and they are coupled in the sense that their proofs use each other's inductive hypothesis. Specifically, assuming that $M_{1}(t)$ is row stochastic and the non-zeros rows of $M_{2}(t)$ sum to $1$, we show that the non-zeros rows of $M_{2}(t+1)$ sum to $1$ and $M_{1}(t+1)$ is row stochastic respectively, establishing that both properties are true for all $t$. 

\begin{lemma} \label{lem:M2_stoch}
The non-zero rows of $M_{2}(t)$ sum to $1$ for all $t$.
\end{lemma}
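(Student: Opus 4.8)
The plan is to prove the statement by induction on $t$, run simultaneously with the companion claim that $M_1(t)$ is row stochastic, exactly in the coupled manner indicated in the surrounding text: the inductive step for $M_2$ invokes the $M_1$-hypothesis at time $t$, and the inductive step for $M_1$ (the next lemma) invokes the $M_2$-hypothesis at time $t$. For the base case I would use $M(0) = I_{n+b}$, so $M_1(0) = I_n$ is row stochastic and $M_2(0) = 0$; then $M_2(1) = R(1) M_1(0) + C_{b\times b} M_2(0) = R(1)$, and since every non-zero row of $R(1)$ contains exactly a single $1$ (every row of $[R(k)\ C_{b\times b}]$ has at most one non-zero entry), the non-zero rows of $M_2(1)$ sum to $1$.

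For the inductive step, I assume $M_1(t)$ is row stochastic and that the non-zero rows of $M_2(t)$ sum to $1$, and I examine $M_2(t+1) = R(t+1) M_1(t) + C_{b\times b} M_2(t)$ one row at a time. The key structural observation is that the row supports of the two summands are disjoint: row $i$ of $R(t+1)$ is non-zero only when $i$ is the \emph{first} delay node $d_1^r$ of a chain that is active at time $t+1$, whereas row $i$ of $C_{b\times b} = \operatorname{diag}(K_1,\dots,K_m)$ with $K_r = \operatorname{diag}(S_1,\dots,S_B)$ is non-zero only when $i$ is \emph{not} the first node of its chain (the first row of each shift block $S_h$ is identically zero), in which case that row of $C_{b\times b}$ equals $\vc{e}_p^T$, where $p$ is the predecessor of $i$ on the same delay chain. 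Hence each row of $M_2(t+1)$ inherits a contribution from at most one of the two terms.

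Then I would split into cases. If row $i$ comes from $R(t+1) M_1(t)$, then row $i$ of $R(t+1)$ is a single $1$ in the column $j$ of the sending compute node, so row $i$ of $M_2(t+1)$ equals row $j$ of $M_1(t)$, which sums to $1$ since $M_1(t)$ is row stochastic. If row $i$ comes from $C_{b\times b} M_2(t)$, then row $i$ of $M_2(t+1)$ equals row $p$ of $M_2(t)$; by Lemma~\ref{lem:match_phi_M2} together with $\phi(t+1) = R(t+1)\vc{1}_n + C_{b\times b}\phi(t)$, row $i$ of $M_2(t+1)$ is non-zero precisely when row $p$ of $M_2(t)$ is, so the inductive hypothesis forces this row to sum to $1$. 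Rows that are identically zero are vacuous. This closes the induction.

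The main obstacle is not a computation but making the disjoint-support bookkeeping airtight: one must carefully read off from the explicit block forms of $R(t+1)$ and $C_{b\times b}$ that no index $i$ is simultaneously the head of an active chain and the image of a shift, and one must ensure, via Lemma~\ref{lem:match_phi_M2}, that the ``non-zero rows'' being summed are exactly the rows actually present at step $t+1$. The coupling with the row-stochasticity of $M_1$ must also be threaded so that the simultaneous induction is not circular — both base cases ($M_1(0)=I_n$ and $M_2(1)=R(1)$) have to be verified independently before either hypothesis is used.
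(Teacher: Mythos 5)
Your proof is correct and follows essentially the same route as the paper: the same recursion $M_2(t+1)=R(t+1)M_1(t)+C_{b\times b}M_2(t)$, the same base case $M_2(1)=R(1)$, and the same case split (head of a delay chain, where the row is copied from the row-stochastic $M_1(t)$, versus interior node, where the shift structure of $C_{b\times b}$ copies a row of $M_2(t)$ that sums to $1$ by hypothesis), coupled inductively with the row-stochasticity of $M_1$. Your explicit disjoint-support remark is just a rephrasing of the paper's observation that rows of $C_{b\times b}$ at chain heads are zero, and the appeal to Lemma~\ref{lem:match_phi_M2} in the shift case is harmless but not needed.
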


\begin{proof}
Initially, $M_{2}(1) = R(1)$, and the base case is true. Suppose for every non-zero row $1 \leq i \leq b$ of $M_{2}(t)$ that $\sum_{j=1}^{n} [M_{2}(t)]_{ij} = 1$. Also by inductive hypothesis, suppose that $M_{1}(t)$ is row stochastic. We will show that the non-zero rows of $M_{2}(t+1)$ sum to $1$. Take any row $1 \leq i \leq b$ of $M_{2}(t+1)$. We have
\begin{align}
\sum_{j=1}^{n} [M_{2}(t+1)]_{ij} = & \sum_{j=1}^{n} [ R(t+1) M_{1}(t)  + C_{b \times b} M_{2}(t) ]_{ij} \notag \\ 
& =  \sum_{j=1}^{n}  [R(t+1) M_{1}(t) ]_{ij} + \sum_{j=1}^{n} [C_{b \times b} M_{2}(t) ]_{ij}.
\end{align}
Given the way the delay nodes are arranged in the random delay model, row $i$  of $R(t+1)$  corresponds to a delay node $d_{r_{1}}^{r_{2}}$ such that $1 \leq r_{2} \leq B$ and $r_{1} \leq r_{2}$. By definition, row $i$ of $R(t+1)$ will be zero if $r_{1} > 1$ and may be non-zero if $r_{1}=1$. We thus distinguish two cases:

$\bullet$ \textit{Case} $r_{1} = 1:$ By definition all rows of $C_{b \times b}$ corresponding to delay nodes at the beginning of delay chains (identified as $d_1^{r_2}$), are zero. If row $i= d_{1}^{r_{2}}$ of $R(t+1)$ is non-zero, it will have all entries equal to zero except  one entry equal to $1$ at some position $1 \leq q \leq n$. As a result
\begin{align}
\sum_{j=1}^{n} [M_{2}(t+1)]_{ij} = & \sum_{j=1}^{n} [R(t+1) M_{1}(t) ]_{ij} + \sum_{j=1}^{n} [C_{b \times b} M_{2}(t)]_{ij} \notag \\
=   \sum_{j=1}^{n}  [M_{1}(t)]_{qj}&  +  \sum_{j=1}^n \mathbf{0}_b^T [M_2(t)]_{:,j} =  \sum_{j=1}^{n} [M_{1}(t)]_{qj} = 1, 
\end{align}
since, by inductive hypothesis, $M_{1}(t)$ has stochastic rows. Of course, if row $i$ of $R(t+1)$ happens to contain only zeros, then the $i$-th row of $M_2(t+1)$ will be a zero row too. 

$\bullet$ \textit{Case} $r_{1} > 1:$ In this case $ \sum_{j=1}^{n} [R(t+1) M_{1}(t) ]_{ij} = 0$ and 
\begin{align}
\sum_{j=1}^{n} [M_{2}(t+1)]_{ij} = & \sum_{j=1}^{n} [C_{b \times b} M_{2}(t)]_{ij}.
\end{align}
Since $C_{b \times b}$ is just a shift matrix, each row $i > 1$ of $M_{2}(t+1)$ will equal to the row $i-1$ of $M_{2}(t)$ which by inductive hypothesis sums to $1$. The first row of $M_2(t+1)$ will be a zero row. 
\end{proof}

\begin{lemma} Matrix $M_1(t)$ is row stochastic. \label{thm:M1_stoch}
\end{lemma}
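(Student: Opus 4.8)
The plan is to prove the statement by induction on $t$, run jointly with Lemma~\ref{lem:M2_stoch}: the stage-$t$ hypothesis is that $M_1(t)$ is row stochastic and that every non-zero row of $M_2(t)$ sums to $1$. The nonnegativity part is immediate---each $\del{P}(t)$ in \eqref{eq:qt_general} has nonnegative entries (the compute-node rows are convex-combination weights by construction and the delay-node rows contain at most a single $1$), so the product $M(t)=\prod_{k=1}^{t}\del{P}(k)$ is nonnegative and hence so is its block $M_1(t)$; thus the only thing left to verify is that the rows of $M_1(t)$ sum to $1$. For the base case $t=1$ we have $M_1(1)=\del{P}_{1,1}(1)$, and since $\phi(0)=\vc{0}$ (so $\bar{P}[\phi(0)]=0$), \eqref{eq:M1} reduces to $M_1(1)=I_{n\times n}-\diag\big(\bar{P}[L(1)]\vc{1}_n\big)+\bar{P}[L(1)]$; because $\bar{P}[L(1)]$ has zero diagonal, row $k$ has diagonal entry $1-\sum_j[\bar{P}[L(1)]]_{kj}$ and off-diagonal entries $[\bar{P}[L(1)]]_{kj}$, which sum to $1$. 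Lemma~\ref{lem:M2_stoch} supplies the companion base case, $M_2(1)=R(1)$ having a single $1$ in each non-zero row.

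For the inductive step I would apply \eqref{eq:M1} to the all-ones vector and substitute the hypothesis $M_1(t)\vc{1}_n=\vc{1}_n$. Using $\diag(\vc{v})\vc{1}_n=\vc{v}$ and the cancellation of the two $\bar{P}[L(t+1)]\vc{1}_n$ terms, the computation telescopes:
\begin{align}
M_1(t+1)\vc{1}_n
&=\Big(I_{n\times n}-\diag\big(\bar{P}[\phi(t)]\vc{1}_b+\bar{P}[L(t+1)]\vc{1}_n\big)+\bar{P}[L(t+1)]\Big)\vc{1}_n+\bar{P}[\phi(t)]\,M_2(t)\vc{1}_n \notag\\
&=\vc{1}_n-\bar{P}[\phi(t)]\vc{1}_b+\bar{P}[\phi(t)]\,M_2(t)\vc{1}_n
 =\vc{1}_n-\bar{P}[\phi(t)]\big(\vc{1}_b-M_2(t)\vc{1}_n\big).
\end{align}
So everything comes down to showing that the residual $\bar{P}[\phi(t)]\big(\vc{1}_b-M_2(t)\vc{1}_n\big)$ is the zero vector.

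This last step is where the two auxiliary lemmas enter, and I expect it to be the only point needing care. By Lemma~\ref{lem:M2_stoch} the $j$-th entry of $\vc{1}_b-M_2(t)\vc{1}_n$ is $0$ whenever row $j$ of $M_2(t)$ is non-zero, and by Lemma~\ref{lem:match_phi_M2} the remaining rows are exactly those with $\phi_j(t)=0$, where the entry equals $1$; hence $\vc{1}_b-M_2(t)\vc{1}_n=\vc{1}_b-\phi(t)$. On the other hand $\bar{P}[\phi(t)]$ is obtained from $J_{n\times b}\diag(\phi(t))$ by replacing its ones with weights from $P$, so its $j$-th column vanishes whenever $\phi_j(t)=0$; therefore $\bar{P}[\phi(t)]\phi(t)=\bar{P}[\phi(t)]\vc{1}_b$, and $\bar{P}[\phi(t)]\big(\vc{1}_b-\phi(t)\big)=\vc{0}$. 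Thus $M_1(t+1)\vc{1}_n=\vc{1}_n$, closing the induction. The main obstacle is really just this bookkeeping of supports---that the columns on which $\bar{P}[\phi(t)]$ acts non-trivially are precisely those where $\vc{1}_b-M_2(t)\vc{1}_n$ vanishes---and once that alignment is set up the algebra is routine; note also that the joint induction is non-circular, since every stage-$(t+1)$ property is derived using only stage-$t$ properties together with Lemma~\ref{lem:match_phi_M2}, which holds for all $t$.
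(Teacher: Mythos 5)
Your proof is correct, and it follows the paper's overall strategy: a joint induction with Lemma~\ref{lem:M2_stoch}, using Lemma~\ref{lem:match_phi_M2} to align the non-zero rows of $M_2(t)$ with the support of $\phi(t)$. The difference is in how the inductive step is executed. The paper works entrywise: it splits the row sum of $M_1(t+1)$ into the diagonal term $T_1$ and the weighted term $T_2$ and evaluates both explicitly through the weights $p_{ij}/(\del{w}_{ij}+L_{ij}(t+1))$, showing the indicator sums cancel to give $T_1+T_2=1$. You instead multiply the recursion \eqref{eq:M1} by $\vc{1}_n$, cancel the two $\bar{P}[L(t+1)]\vc{1}_n$ contributions, and reduce everything to the single identity $\bar{P}[\phi(t)]\big(\vc{1}_b - M_2(t)\vc{1}_n\big)=\vc{0}$, which follows because the columns of $\bar{P}[\phi(t)]$ vanish exactly where $\phi_j(t)=0$ while the residual vector is supported only there (by Lemmas~\ref{lem:match_phi_M2} and~\ref{lem:M2_stoch}). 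This buys a shorter argument that never touches the explicit weight values, whereas the paper's computation makes visible how the unused weight is reassigned to the self-loop. Two small points in your favor: you address nonnegativity explicitly, and your base case $M_1(1)=\del{P}_{1,1}(1)$ is the literally correct reading of $M(1)=\del{P}(1)$ (the paper's statement $M_1(1)=I$ is a slight imprecision, though harmless since $\del{P}_{1,1}(1)$ is row stochastic, as you verify). Your remark on non-circularity of the coupled induction is accurate and matches the paper's structure.
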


\begin{proof}
Proceeding inductively, the base case is true since $M_{1}(1) = I$. Assume at step $t > 1$ that $\sum_{j=1}^{n} [M_{1}(t)]_{ij} = 1$ for every row  $1 \leq i \leq n$. At step $t+1$ assume that compute node $i$ receives $\del{w}_{ij}$ messages from node $j$ through different delay chains plus possibly a message without delay if $L_{ij}(t+1) = 1$. Since the self loop message is always delivered without delay we know that $\del{w}_{ii}=1$ . We have
\begin{align}
\sum_{j=1}^n & [M_1(t+1)]_{ij} \notag \\
= &  \sum_{j=1}^n \Big[\Big(  I_{n \times n} - \diag(\bar{P}[\phi(t)] \vc{1}_b + \bar{P}[L(t+1)] \vc{1}_n) \notag \\
 & \ \ \ \ \ \ \ \ \ \ \ \ +\bar{P}[L(t+1)] \Big) M_{1}(t) + \bar{P}[\phi(t)] M_{2}(t)\Big]_{ij} \\
= & \underbrace{\sum_{j=1}^n \left[\Big(  I_{n \times n} - \diag(\bar{P}[\phi(t)] \vc{1}_b + \bar{P}[L(t+1)] \vc{1}_n)\Big) M_{1}(t)\right]_{ij}}_{T_1} \notag \\
& + \underbrace{\sum_{j=1}^n \big[ \bar{P}[L(t+1)] M_1(t)   + \bar{P}[\phi(t)] M_{2}(t) \big]_{ij}}_{T_2}.
\end{align}
Consider the term $T_1$ first, and notice that $I_{n \times n} - \diag(\bar{P}[\phi(t)] \vc{1}_b + \bar{P}[L(t+1)] \vc{1}_n)$ is a diagonal matrix so we have 
\begin{align}
 T_1 = &(1 - [\diag(\bar{P}[\phi(t)] \vc{1}_b +  \bar{P}[L(t+1)] \vc{1}_n ]_{ii}) \sum_{j=1}^n [M_1(t)]_{ij} \notag \\
  = &1 - \sum_{j=1}^n \mathds{1}[\del{w}_{ij} > 0 \text{\ or\ } L_{ij}(t+1) > 0] p_{ij}.
\end{align}
Next let us focus on term $T_2$ which is composed of two summands. For the first summand we have
\begin{align}
\sum_{j=1}^n & [\bar{P}[L(t+1)] M_1(t)]_{ij} \notag \\
= & \sum_{j=1}^n  \sum_{k=1}^n \bar{P}[L(t+1)]_{ik} [M_1(t)]_{kj}  \\
= & \sum_{k=1}^n \bar{P}[L(t+1)]_{ik} \sum_{j=1}^n [M_1(t)]_{kj} \\
= & \sum_{k=1}^n \bar{P}[L(t+1)]_{ik} \\
= & \sum_{k=1}^n L_{ik}(t+1) \frac{p_{ik}}{\del{w}_{ik} + L_{ik}(t+1)} \\
= & \sum_{j=1}^n \mathds{1} [L_{ij}(t+1)  > 0] L_{ij}(t+1) \frac{p_{ij}}{\del{w}_{ij} + L_{ij}(t+1)}.
\end{align}
To compute the second summand in $T_2$, from Lemma \ref{lem:match_phi_M2} we know that the non-zero rows of $M_2(t)$ are at the same position as those of $\phi(t)$. Observe now that those positions are the same as the non-zero rows of $J_{n \times b} \Phi(t)$ and thus the non-zero rows of $\bar{P}[\phi(t)]$. Assume that at iteration $t$ node $i$ receives delayed messages only from the compute nodes in the set $\mathcal{N}_i(t) \subseteq V$. Moreover, assume node $i$ receives $\del{w}_{i n_r} \geq 1$ messages from neighbour $n_r \in \mathcal{N}_i(t)$ through different delay chains. We have
\begin{align}
\sum_{j=1}^n [\bar{P}[\phi(t)] &M_2(t)]_{ij} = \sum_{j=1}^n  \bar{P}[\phi(t)]_{i,:} [M_2(t)]_{:,j}  \\
= & \sum_{j=1}^n  \sum_{n_r \in \mathcal{N}_i(t)}  \sum_{l=1}^{\del{w}_{in_r}}  \frac{p_{i n_r}}{\del{w}_{i n_r} + L_{i n_r}(t+1)}   [M_2(t)]_{n_rj}  \\
= &   \sum_{n_r \in \mathcal{N}_i(t)}  \sum_{l=1}^{\del{w}_{in_r}}  \frac{p_{i n_r}}{\del{w}_{i n_r} + L_{i n_r}(t+1)}   \sum_{j=1}^n [M_2(t)]_{n_rj}  \\
= &   \sum_{n_r \in \mathcal{N}_i(t)}   \frac{p_{i n_r}}{\del{w}_{i n_r} + L_{i n_r}(t+1)} \del{w}_{i n_r}  \\
= & \sum_{j=1}^n \mathds{1}[\del{w}_{ij} > 0] \frac{p_{ij}}{\del{w}_{i j} +  L_{i j}(t+1)} \del{w}_{ij}.
\end{align}
So now we see that
\begin{align}
T_2 = & \sum_{j=1}^n \mathds{1}[L_{ij}(t+1) > 0] L_{ij}(t+1) \frac{p_{ij}}{\del{w}_{ij} + L_{ij}(t+1)} \notag \\
& + \sum_{j=1}^n \mathds{1}[\del{w}_{ij} > 0] \frac{p_{ij}}{\del{w}_{i j} +  L_{i j}(t+1)} \del{w}_{ij} \\
= & \sum_{j=1}^n  \mathds{1}[\del{w}_{ij} > 0 \text{\ or \ } L_{ij}(t+1) > 0] \\
& \times \frac{p_{ij}}{\del{w}_{i j} +  L_{i j}(t+1)} (\del{w}_{ij} + L_{ij}(t+1)) \\
= & \sum_{j=1}^n  \mathds{1}[\del{w}_{ij} > 0\text{\ or \ } L_{ij}(t+1) > 0] p_{ij},
\end{align}
and finally 
\begin{align}
 \sum_{j=1}^n [M_1(t+1)]_{ij} = T_1 + T_2 = 1.
\end{align}
Therefore $M_1(t)$ is row stochastic for all $t$.
\end{proof}

Finally, we can state the  result as follows.
\begin{cor} \label{cor:rowstochastic_convergence}
Given a graph $G$ and a row stochastic consensus protocol $P$, if we run consensus on $G$ with random delays up to $B$ using updates \eqref{eq:randomdelay_consensus} with $\del{P}(t)$ given by \eqref{eq:qt_general}, all compute nodes of $G$ asymptotically reach consensus on a value $v$ that is a convex combination of their initial values.
\end{cor}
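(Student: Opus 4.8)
The plan is to read the corollary off the machinery already in place: Theorem~\ref{thm:random_delay_convergence} supplies convergence to a common limit, and Lemma~\ref{thm:M1_stoch} supplies the ``convex combination'' structure. The first step is to note that the compute nodes decouple from the (empty) delay slots. Since the delay nodes start empty, $\vc{\phi}(0)=\vc{0}_b$, and we may take the delay-slot entries of $\del{\vc{x}}(0)$ to be $0$, i.e.\ $\del{\vc{x}}(0)=[\vc{x}(0)^T\ \vc{0}_b^T]^T$. Writing $\del{\vc{x}}(t+1)=M(t)\del{\vc{x}}(0)$ with $M(t)$ as in \eqref{eq:definition_of_Mt}, the top $n$ coordinates are then simply $\del{\vc{x}}_{1:n}(t+1)=M_1(t)\vc{x}(0)$; intuitively, the only places values originate are the compute nodes, so whatever a compute node holds at time $t$ is a relayed combination of compute-node initial values. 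Compute nodes are non-empty at every iteration, so Theorem~\ref{thm:random_delay_convergence} gives $\del{x}_i(t)\to v$ almost surely for all $i\le n$, that is, $M_1(t)\vc{x}(0)\to v\,\vc{1}_n$.

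The second step identifies $v$. The delays are generated by the environment, independently of the data, so the matrices $\del{P}(t)$ in \eqref{eq:qt_general}, and hence the sequence $\{M_1(t)\}$, do not depend on $\vc{x}(0)$; therefore the limit above holds, on the same almost-sure event, simultaneously for every $\vc{x}(0)\in\mathds{R}^n$. Applying it to the standard basis vectors $\vc{e}_1,\dots,\vc{e}_n$ shows that the $j$-th column of $M_1(t)$ converges to $c_j\vc{1}_n$ for some scalar $c_j$, i.e.\ $M_1(t)\to\vc{1}_n\vc{c}^T$ with $\vc{c}=(c_1,\dots,c_n)^T$. By Lemma~\ref{thm:M1_stoch} every $M_1(t)$ is row stochastic (hence nonnegative), so the limit $\vc{1}_n\vc{c}^T$ is row stochastic and nonnegative, which forces $\vc{c}\ge\vc{0}$ and $\vc{c}^T\vc{1}_n=1$. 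Consequently, for an arbitrary $\vc{x}(0)=(v_1,\dots,v_n)^T$ we get $v=\vc{c}^T\vc{x}(0)=\sum_{j=1}^n c_j v_j$, a convex combination of the initial values, which is the assertion.

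The step I expect to require the most care is the decoupling used in the first step: one must be precise that the compute-node subvector is exactly $M_1(t)\vc{x}(0)$ regardless of the nominal contents of the empty delay slots (this follows from $\vc{\phi}(0)=\vc{0}_b$ together with the block structure of $\del{P}(t)$ and the recursions \eqref{eq:M1}--\eqref{eq:M2}), and that the almost-sure event on which Theorem~\ref{thm:random_delay_convergence} applies does not itself depend on $\vc{x}(0)$, so that a single sample path can be fixed while $\vc{x}(0)$ ranges over the basis. The latter is genuinely fine, since eventual paracontractivity of the products $\del{P}_{r(2B+1)}(t)$ is a property of the matrix sequence and not of the data. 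It is worth noting a more self-contained but trickier alternative: one could try to prove $M_1(t)\to\vc{1}_n\vc{c}^T$ directly via coefficients of ergodicity, bounding $\tau_1$ of the accumulated product $\prod_{r\le t}M_1(r)$ (which shrinks because within every window of $B$ iterations some $M_1(r)$ has all entries strictly below $1$, so $\tau_1(M_1(r))<1$) and using Lemma~\ref{lem:match_phi_M2} and Lemma~\ref{lem:M2_stoch} to handle the $\bar{P}[\phi(t)]M_2(t)$ term in \eqref{eq:M1}; recovering convergence of the individual factors $M_1(t)$ from convergence of the accumulated product is, however, not automatic, so routing through Theorem~\ref{thm:random_delay_convergence} is the cleaner path.
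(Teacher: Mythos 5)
Your proposal is correct and takes essentially the same route as the paper: use the zero-initialized delay slots and the block structure of $M(t)$ to get $[\del{\vc{x}}(t)]_{1:n} = M_1(t)\,\vc{x}(0)$, then combine Theorem~\ref{thm:random_delay_convergence} (consensus of the compute nodes) with Lemma~\ref{thm:M1_stoch} (row stochasticity of $M_1(t)$). Your additional step of applying the convergence to the basis vectors to conclude $M_1(t)\rightarrow \vc{1}_n\vc{c}^T$ with $\vc{c}\geq \vc{0}$, $\vc{c}^T\vc{1}_n=1$ is sound (the a.s.\ event indeed concerns only the matrix sequence) but slightly more than needed; the paper concludes directly from the fact that each compute-node value is, at every $t$, a convex combination of the initial values via the nonnegative row-stochastic rows of $M_1(t)$, so the limit $v$ lies in their closed convex hull.
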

\begin{proof}
After $t$ iterations we have $\del{\vc{x}}(t) = M(t) \del{\vc{x}}(0)$  where $\vc{\del{x}}(t)$ is the augmented vector containing the values of the compute nodes followed by all the delay nodes. The delay nodes do not initially contain any information, so we have $[\del{\vc{x}}(0)]_{n+1:n+b} = 0$. After $t$ iterations,
\begin{align}
\del{x}_i(t) =  &M_1(t) [\del{\vc{x}}(0)]_{1:n} + M_3(t) [\del{\vc{x}}(0)]_{n+1:n+b}\\
 = & M_1(t) [\vc{x}(0)]_{1:n}.
\end{align}
So, as $t \rightarrow \infty$, since $\del{x}_i(t) \rightarrow v$ and $M_1(t)$ is row stochastic, $v$ is a convex combination of the initial values.
\end{proof}

As a last comment, notice the we achieve consensus on the compute nodes, even though the overall matrix $M(t)$ does not have a limit. Specifically, the rows corresponding to delay nodes oscillate between zero and non-zero values. However this  does not affect the sub matrix corresponding to the compute nodes. Notice also, that from this analysis we cannot say anything concrete about the rate of convergence. A convergence rate bound in expectation could be obtained by applying the Poincar\'e technique from the previous section on $\mathds{E}[\del{P}(t)]$. Alternatively, it might be possible to derive a more accurate bound by analyzing the recursions \eqref{eq:M1}, \eqref{eq:M2}. After realizing that $C^B = 0$, $M_2(t)$ can be eliminated given enough past terms, and the evolution of $M_1(t)$ resembles that of the impulse response of a multivariate $AR(B)$ model.

\section{Push-Sum Consensus}

The previous section studies the behaviour of general consensus protocols using row stochastic matrices in the presence of fixed and random delays. In the random delay case  the model is a bit involved due to the fact that we need to keep track of which delay nodes are empty, and also a compute node does not know how many messages it will receive at each iteration. Moreover, the convergence proof needs to be tailored specifically to the model because the resulting matrices $\del{P}(t)$ are not row stochastic. Even more importantly, we do not have a statement characterizing the convergence rate and the limiting state is a convex combination of the initial values at each node which is not necessarily the average. In this section we study a different consensus algorithm called Push-Sum. As we explain, Push-Sum is a more natural algorithm for distributed averaging in networks with delay; it alleviates all the aforementioned complications, simplifies the delay models, and always converges to the true average.

A simple asynchronous version of Push-Sum is proposed and analyzed in \cite{PushSum} for complete graphs. In \cite{WeightedGossip} the algorithm is analyzed in its general form for any graph. The Push-Sum protocol makes use of column stochastic consensus matrices and each node $i$ maintains two values: a cumulative estimate of the sum $s_i(t)$ and a weight $w_i(t)$. The local estimate of the average at each iteration is the ratio $x_i(t) = \frac{s_i(t)}{w_i(t)}$. The algorithm is initialized by setting 
\begin{align} \label{eq:pushsuminit}
\vc{s}(0) & = \vc{x}(0) \text{\ \ and\ \ } \vc{w}(0)  = \vc{1}.
\end{align}
Given the topology of the (directed) network $G$, we use at each iteration a  column stochastic matrix $P(t)$ respecting $G$. At each iteration, node $j$ splits its total sum $s_{j}(t)$ and weight $w_{j}(t)$ into shares $\Big\lbrace S_j(i) = \big(p_{ij}(t) s_{j}(t), p_{ij}(t) w_{j}(t) \big), i \in V \Big\rbrace$ where $\sum_{i=1}^{n} p_{ij}(t) = 1$, and sends to each neighbour $i$ the corresponding share $S_j(i)$. Equation \eqref{eq:pushsum} shows the actions performed at each receiver; i.e., simply add up all the incoming shares. In vector form the state evolves as
\begin{align} 
\vc{s}(t) & = P(t) \vc{s}(t-1)  \text{\ \ and\ \ } \vc{w}(t)  = P(t) \vc{w}(t-1) \label{eq:pushsum} \\ 
\vc{x}(t) & = \frac{\vc{s}(t)}{\vc{w}(t)}, \label{eq:avg_estimate_pushsum}
\end{align}
where the division of $\vc{s}(t)$ and $\vc{w}(t)$ is element-wise. We can verify that the updates \eqref{eq:pushsum} satisfy a conservation of mass property in the sense that for all $t \ge 0$,
\begin{align} \label{eq:mass}
 \sum_{i=1}^n s_i(t) & = \sum_{i=1}^n x_i(0) = \vc{1}^T x(0) = n x_{ave}\\
 \sum_{i=1}^n w_i(t) & = n.
\end{align}

To see why Push-Sum converges to the true average even in the time-varying case, assume $P(t)$ are sampled i.i.d.~such that $\mathds{E}[P]$ is irreducible at each iteration. Then the sequence $\{P(t)\}_{t=1}^\infty$ is weakly ergodic (Lemma $4.2$ in \cite{WeightedGossip}). Let us call $P^{\infty}$ the limit of the forward product $P(1)^T P(2)^T \cdots P(t)^T$ as $t \rightarrow \infty$. As a product of row stochastic matrices, $P^{\infty}$ is row stochastic with all rows the same. At any node $i$ we have
\begin{align} \label{eq:push_sum_convergence}
x_{i}(\infty)^T & = \frac{\big[\vc{s}(0)^T P^{\infty}  \big]_{i}}{\big[ \vc{w}(0)^T P^{\infty} \big]_{i}} = \frac{\big[ \vc{x}(0)^T  P^{\infty}  \big]_{i}}{\big[ \mathbf{1}^T  P^{\infty} \big]_{i}} = \frac{ \sum_{j=1}^{n} p_{ji}^{\infty} x_{j}(0) }{ \sum_{j=1}^{n} p_{ji}^{\infty}} \\
& = \frac{p_{1i}^{\infty}  \sum_{j=1}^{n}  x_{j}(0) }{ p_{1i}^{\infty} \sum_{j=1}^{n} 1} = \frac{ \sum_{j=1}^{n}  x_{j}(0) }{n} = x_{ave}.
\end{align}
We use the fact that all rows of $P^{\infty}$ are the same; i.e. $P_{ji}^\infty = P_{1i}^\infty$, for all $i,j$. For a formal proof see \cite{WeightedGossip}. Notice that Push-Sum computes the average without using doubly stochastic matrices or requiring knowledge of the stationary distribution a priori. 

\subsection{Consensus with Fixed Delays using Push-Sum}

In the case of fixed delays, the construction of a protocol with delays $\del{P}$ based on an initial protocol $P$ is the same as in Section \ref{sec:fixed}. The only difference is that we start with a column stochastic protocol $P$ and convert it to a new column stochastic matrix $\del{P}$ by adding delays one edge at a time. For example, if we start with the protocol  \eqref{eq:examplePorig}, after adding a delay of $2$  on the edge $(1,2)$ we have
\begin{align}
\del{P} = \kbordermatrix{
~ & 1 & 2 & 3 & d_{1} & d_{2} \cr
1 &  \frac{2}{3} & \frac{1}{3} & 0         & 0 & 0   \cr
2 &  0 & \frac{1}{3} & \frac{1}{2} & 0 & 1   \cr
3 &    \frac{1}{6}          & \frac{1}{3} & \frac{1}{2} & 0 & 0   \cr
d_{1} &       \frac{1}{6}           & 0                 & 0           & 0 & 0   \cr
d_{2} &       0           & 0                 & 0           & 1 & 0  \cr
}.
\end{align}
In the case of Push-Sum, delay node $d_1$ receives $\frac{1}{6}$ of the share of node $1$. Using  $\del{P}$, average consensus is achieved by iterating
\begin{align}
\del{\vc{s}}(t) = \del{P} \del{\vc{s}}(t-1),\  \del{\vc{w}}(t) = \del{P} \del{\vc{w}}(t-1).
\end{align}
For the purpose of analysis, we initialize the delay nodes with $s_i(0) = w_i(0)=0, n+1 \leq i \leq n+b$, or in vector form,
\begin{align} \label{eq:pushsuminit}
\del{\vc{s}}(0) & = [\vc{x}(0)^{T} \ \ \ \vc{0}_{b}^{T}]^{T} \\
\del{\vc{w}}(0) & = [\vc{1}_{n}^{T} \ \ \ \ \ \ \ \vc{0}_{b}^{T}]^{T}.
\end{align}
If we run Push-Sum using the delayed consensus protocol $\del{P}$, writing $\del{P}^{\infty}$ for the limit of $\del{P}^{t}$ as $t \rightarrow \infty$ we see that the estimate of the average $\vc{x}_i$ at each node $i$ will be the true average of the initial values:
\begin{align}
x_{i}(\infty) & = \frac{\big[\del{P}^{\infty} \del{\vc{s}}(0) \big]_{i}}{\big[\del{P}^{\infty}  \del{\vc{w}}(0) \big]_{i}} = \frac{\big[\del{P}^{\infty} [\vc{x}(0)^{T}\ \ \vc{0}_{b}^{T}]^{T} \big]_{i}}{\big[\del{P}^{\infty}  [\vc{1}_{n}^{T} \ \ \vc{0}_{b}^{T}]^{T} \big]_{i}} \\
 = &\frac{ \sum_{j=1}^{n} \del{P}_{ij}^{\infty} x_{j}(0) }{ \sum_{j=1}^{n} \del{P}_{ij}^{\infty}}  = \frac{\del{P}_{i1}^{\infty}  \sum_{j=1}^{n}  x_{j}(0) }{ \del{P}_{i1}^{\infty} \sum_{j=1}^{n} 1} = \frac{ \sum_{j=1}^{n}  x_{j}(0) }{n}
\end{align}
since $\del{P}$ is column stochastic and $\del{P}^\infty$ will have identical columns. Obviously, the convergence rate bound \eqref{eq:aq} applies here as well.

\subsection{Consensus with Random Delays using Push-Sum} In row stochastic protocols with random delays, we need an indicator vector $\phi(t)$ to know whether a delay node contains information or is empty. We also need to assign the portion of the weight that is being unused to the self-loop message. Both of those complications arise from the fact that we do not know how many messages will be received at each iteration. With Push-Sum consensus however, the semantics suggest that the sending node decides how much weight to assign to each outgoing message, and each receiving node simply sums up the incoming $s$ and $w$ values without caring about the number of incoming messages. This fact simplifies both the model and the convergence analysis when we account for time-varying delays. 

Recall from the random delay model construction that the adjacency matrix $A(t)$ is given by \eqref{eq:randomdelayA}. However, now we are given a column stochastic matrix $P$ and need to construct a column stochastic matrix $\del{P}(t)$. Since $P$ indicates the outgoing weights,  the construction is straightforward:
\begin{equation}
\del{P}(t) = \begin{bmatrix} \label{eq:qt_general_pushsum}
	   \operatorname{diag}(P) + P \circ L(t) & J_{n \times b} \\
            \bar{P}[R(t)] & C_{b \times b}
\end{bmatrix},
\end{equation}
where, by $\operatorname{diag}(P)$ we mean a matrix with diagonal entries the same as those of $P$ and off-diagonal entries set to zero, and where $\circ$ denotes entry-wise (Hadamard) matrix multiplication. We define the operator $\bar{P}[R(t)]$ a bit differently than in the previous section. If $[R(t)]_{d_1^r,j} = 1$, where $d_1^r$ is the first node on a delay chain from compute node $j$ to compute node $i$, then we set $\bar{P}[R(t)]_{d_1^r,j} = p_{ij}$. Again for the purpose of analysis we initialize the $s$ and $w$ values for the delay nodes to zero. 

With Push-Sum, the model is simplified because we no longer need the vector $\phi$ to indicate which delay nodes contain information. The reason is that we have the weights $\vc{w}$ and an empty delay node is represented by having a weight of zero. Notice, in addition, that $\del{P}(t)$ is column stochastic by construction and does not contain zero columns. This allows us to use  weak ergodicity theory  \cite{SenetaBook,WeakErgodicityConditionsJadbabaie07} to establish convergence. 

\subsection{Convergence of Push-Sum consensus with Random Delays}

Using the random delay model with column stochastic matrices yields a forward product, and to prove convergence of this algorithm we need to establish weak ergodicity as was mentioned at the end of Section \ref{sec:consensus}. Since each matrix $\del{P}(t)$ in \eqref{eq:qt_general_pushsum} contains zeros on the diagonal, we cannot apply known results directly. In this section we derive a worst case (pessimistic) geometric convergence rate. We first need the following lemma.

\begin{lemma} \label{lem:delay_diameter}
If a strongly connected graph $G$ has diameter $D$, the graph $\del{G}$ obtained by adding arbitrary delays of up to $B$ on each edge has diameter at most $\del{D} \leq (B+1) D + B+1$.
\end{lemma}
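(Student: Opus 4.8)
The plan is to bound, for an arbitrary ordered pair of nodes $u,v$ of $\del{G}$, the distance $d_{\del{G}}(u,v)$, splitting into cases according to whether each of $u,v$ is an original compute node or one of the added delay nodes. The two building blocks are a short detour to leave, or to enter, a delay chain, and a path between two compute nodes obtained by lifting a shortest path of $G$ into $\del{G}$.

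First I would record the structure of $\del{G}$ coming from the random delay construction. Every directed edge $(i,j)$ of $G$ appears in $\del{G}$ both as the zero-delay edge $i\to j$ and as parallel chains through $k$ delay nodes for $1\le k\le B$. Hence, lifting a shortest $G$-path between two compute nodes $a,b$ edge by edge through the zero-delay edges gives $d_{\del{G}}(a,b)\le d_{G}(a,b)\le D$. Moreover a delay node $x$ sits at some position $p$ on a chain of $k\le B$ delay nodes replacing an edge $(i,j)$, and since $x$ can be entered only from the head $i$ of its chain and can be left only by moving forward toward the tail $j$, one has $d_{\del{G}}(i,x)=p\le B$ and $d_{\del{G}}(x,j)=k-p+1\le B$; in particular $\del{G}$ is strongly connected, so all distances are finite.

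Then I would assemble the general bound by the triangle inequality. Given $u,v$, let $\bar u$ be $u$ itself if $u$ is a compute node and otherwise the tail compute node of the chain containing $u$, and symmetrically let $\bar v$ be $v$ itself if $v$ is a compute node and otherwise the head compute node of the chain containing $v$. By the structural observations, $d_{\del{G}}(u,\bar u)\le B$ and $d_{\del{G}}(\bar v,v)\le B$, while $\bar u,\bar v$ are compute nodes so $d_{\del{G}}(\bar u,\bar v)\le D$. Therefore
\[
d_{\del{G}}(u,v)\ \le\ d_{\del{G}}(u,\bar u)+d_{\del{G}}(\bar u,\bar v)+d_{\del{G}}(\bar v,v)\ \le\ D+2B .
\]
Since $(B+1)D+B+1-(D+2B)=B(D-1)+1\ge 1$ whenever $D\ge 1$, this yields $\del{D}\le D+2B\le (B+1)D+B+1$, which is the claimed bound (in fact with room to spare).

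I expect the only delicate point to be the delay-node-to-delay-node case: one must notice that a delay node forces both of its ``gateways'' (any route out must go forward to the chain tail, and any route in must arrive at the chain head), which is precisely what produces the $2B$ term, and one must keep the intervening compute-to-compute segment short --- of length at most $D$ --- by routing through the zero-delay edges rather than through the long chains. Replacing that segment by the longest chains instead would give only the weaker estimate $d_{\del{G}}(u,v)\le (B+1)D+2B$, so the presence of the direct (zero-delay) edges in the random delay model is what lets the stated constant go through cleanly.
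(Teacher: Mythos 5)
Your argument rests on a reading of $\del{G}$ that is not the one this lemma (or its use in the Push-Sum convergence theorem) requires. You take $\del{G}$ to contain, for every edge $(i,j)$ of $G$, the zero-delay edge $i\to j$ \emph{in addition to} all the parallel delay chains, and this is exactly what gives you $d_{\del{G}}(\bar u,\bar v)\le d_G(\bar u,\bar v)\le D$ between compute nodes, hence the bound $D+2B$. But ``adding arbitrary delays of up to $B$ on each edge'' means each edge is assigned some fixed delay $b_e\in\{0,\ldots,B\}$, chosen adversarially, and is traversed only through its chain of $b_e$ delay nodes; the direct edge is available only if $b_e=0$. This is also what the application needs: $\del{D}$ must bound the number of iterations for information to travel between any two compute nodes for \emph{every} realization of the per-iteration delay choices, and nothing forces the zero-delay route ever to be selected inside the window, so each hop may cost $B+1$ steps. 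Concretely, let $G$ be a directed cycle on $n$ nodes ($D=n-1$) with every edge delayed by $B$: then $\del{G}$ is a directed cycle on $n(B+1)$ vertices, compute-to-compute distances reach $(B+1)D$, and the diameter is $(B+1)D+B$, which exceeds your $D+2B$ as soon as $D\ge 2$. So the key step $d_{\del{G}}(\bar u,\bar v)\le D$ is false in the intended model, and the proposed proof does not establish the lemma.

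Your structural observations about delay nodes are fine and match the paper's ingredients: a chain can be entered only at its head and left only through its tail, so delay-node endpoints contribute at most an additive $O(B)$ at each end. What is missing is that the intermediate compute-to-compute segment must be charged $B+1$ per hop, i.e., up to $(B+1)D$, not $D$. As you yourself note, patching your decomposition in this way yields $(B+1)D+2B$; the paper instead lifts a diameter-achieving path of $G$ (cost $(B+1)D$) and argues that the extremal endpoints can be taken to be delay nodes near the middle of chains incident to its two ends, contributing about $(B+1)/2$ each, which is how it arrives at the stated $(B+1)D+B+1$. To repair your proof you need the $(B+1)D$ middle term together with an endpoint argument of that kind, or else settle for the slightly larger additive constant $2B$ --- which, for the purpose the lemma serves in the convergence theorem (any finite bound of the form $(B+1)D+O(B)$), would still suffice.
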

\begin{proof}
Let $K = v \rightarrow v_{1} \rightarrow \cdots \rightarrow v_{D-1} \rightarrow w$ be a path in $G$ with length equal to $D$. By adding at most $B$ delay nodes per directed edge, each edge of $G$ is replaced by $B+1$ edges in $\del{G}$ and the corresponding path $\del{K}$ has length $(B+1) D$ in $\del{G}$. All neighbours of $v$ and $w$ in $G$ belong to $K$ or else the diameter would be longer. Suppose that in the worst case, $v$ has a neighbor $z_{1} \ne v_1$ and $w$ has a neighbor $z_{2} \ne v_{D-1}$ in $G$. After adding delays, the longest path in $\del{G}$ goes from the delay node in the middle of the longest delay chain between $z_{1}$ and $v$ and the delay node in the middle of the longest delay chain between $z_2$ and $w$ and has length at most $\del{D} \leq (B+1)D + \frac{B+1}{2} + \frac{B+1}{2} = (B+1)D + B + 1$.
\end{proof}

Now we can state the main convergence result of this section.
\begin{thm}
If we run Push-Sum on a strongly connected graph $G$ using a column stochastic protocol $P$, then in the presence of bounded time-varying delays modelled by \eqref{eq:qt_general_pushsum}, average consensus is achieved at a geometric rate. 
\end{thm}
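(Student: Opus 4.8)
\medskip
\noindent\textit{Proof plan.}\quad The plan is to reduce the statement to a weak ergodicity estimate with a geometric rate for the forward product of the transposed delay matrices, and to derive that estimate from a uniform ``block positivity'' property that holds over every window of $T:=(B+1)(D+1)$ iterations, $D$ being the diameter of $G$. First I would record the algebra: iterating \eqref{eq:qt_general_pushsum} gives $\del{\vc{s}}(t)=\del{M}(t)\,\del{\vc{s}}(0)$ and $\del{\vc{w}}(t)=\del{M}(t)\,\del{\vc{w}}(0)$ with $\del{M}(t)=\del{P}(t)\del{P}(t-1)\cdots\del{P}(1)$, where every factor is column stochastic, has no zero column, and is a deterministic function of the adjacency matrix $A(t)\in\{A^{1},\dots,A^{(B+1)^{m}}\}$. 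Because the delay nodes start at zero, for a compute node $i\le n$ the estimate is the convex combination $x_i(t)=\big(\sum_{j\le n}[\del{M}(t)]_{ij}x_j(0)\big)\big/\big(\sum_{j\le n}[\del{M}(t)]_{ij}\big)$, while $\vc{1}^{T}\del{P}(t)=\vc{1}^{T}$ gives the conservation identities $\sum_j\del{s}_j(t)=n\,x_{ave}$ and $\sum_j\del{w}_j(t)=n$. Hence it suffices to show that, for each $i\le n$, the numbers $[\del{M}(t)]_{ij}$ with $j\le n$ become equal up to a geometrically vanishing error, together with a uniform lower bound $\del{w}_i(t)\ge c>0$ for large $t$.

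The crux is the claim that \emph{for every $t$ and every realisation of the activations the window product $W=\del{P}(t+T)\cdots\del{P}(t+1)$ satisfies $W_{ij}\ge\delta$ for all $i\le n$ and all $j$, with $\delta>0$ universal}. Indeed, the intra-chain forwarding edges, encoded by the always-present blocks $J_{n\times b}$ and $C_{b\times b}$, carry the content of any delay node to some compute node within $B$ iterations with no loss of weight; and from any compute node a shortest path of $G$ (length $\le D$) is realised within $(B+1)D$ iterations, because at each iteration every compute node $j$ sends its share $p_{ij}>0$ toward each out-neighbour $i$ --- through whichever delay chain the environment activates, if any --- so each hop costs at most $B+1$ iterations (this is the time-varying counterpart of Lemma~\ref{lem:delay_diameter}). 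Using the positive self-loops $p_{ii}>0$ of the compute nodes to idle through the remaining iterations produces, for any $j$ and any compute node $i\le n$, a weight-carrying space-time path of length exactly $T$ from $j$ to $i$ whose surviving weight is bounded below by a fixed power of the smallest positive entry of $P$; since there are only finitely many possible window products, $\delta$ can be taken uniform.

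To conclude, transpose: $W^{T}$ is row stochastic with its first $n$ columns bounded below by $\delta$, so its Dobrushin coefficient $\tau_1(M):=\tfrac12\max_{r,r'}\sum_{c}\abs{M_{rc}-M_{r'c}}$ satisfies $\tau_1(W^{T})\le 1-n\delta$. Since $\del{M}(t)^{T}=\del{P}(1)^{T}\cdots\del{P}(t)^{T}$, splitting the factors into $\lfloor t/T\rfloor$ consecutive windows (plus a leftover block with $\tau_1\le1$) and using submultiplicativity of $\tau_1$ gives $\tau_1\big(\del{M}(t)^{T}\big)\le(1-n\delta)^{\lfloor t/T\rfloor}$; equivalently, for each $i\le n$ the numbers $[\del{M}(t)]_{ij}$ ($j\le n$) all lie within $2(1-n\delta)^{\lfloor t/T\rfloor}$ of their mean $\del{w}_i(t)/n$. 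Moreover, writing $\del{M}(t)=W\,\del{M}(t-T)$ for $t\ge T$ with $W$ the most recent window and using column stochasticity of $\del{M}(t-T)$ gives $[\del{M}(t)]_{ij}\ge\delta$ for all $i\le n$, hence $\del{w}_i(t)\ge n\delta$. Substituting $[\del{M}(t)]_{ij}=\del{w}_i(t)/n+O\big((1-n\delta)^{\lfloor t/T\rfloor}\big)$ into the formula for $x_i(t)$, together with the conservation identities and $\del{w}_i(t)\ge n\delta$, yields $\abs{x_i(t)-x_{ave}}\le \tfrac{2}{\delta}\norm{\vc{x}(0)}_{\infty}(1-n\delta)^{\lfloor t/T\rfloor}$, which is geometric in $t$.

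The step I expect to be the main obstacle is the block-positivity claim: one must verify that, despite the time-varying topology (only a single delay chain per edge is active at any iteration) and the zero diagonals of the delay nodes, weight genuinely flows from every node to every compute node within $T$ iterations, with a surviving amount bounded below \emph{uniformly} over all activation patterns. The remaining ingredients --- the conservation bookkeeping, the finiteness of the matrix family, and submultiplicativity of the ergodicity coefficient --- are routine.
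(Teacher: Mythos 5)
Your proposal is correct and takes essentially the same route as the paper: a uniform positivity property of window products whose length $(B+1)D+B+1$ is exactly the delayed diameter of Lemma~\ref{lem:delay_diameter}, combined with submultiplicativity of an ergodicity coefficient for the forward product of the transposed matrices, and the conservation/ratio argument to identify the limit as $x_{ave}$. If anything, your write-up is slightly more explicit than the paper's (uniform $\delta$ via finiteness of the window-product family, the weight lower bound $\del{w}_i(t)\ge n\delta$, and an explicit geometric error bound), whereas the paper invokes Seneta's weak-ergodicity theorem and states the geometric rate conditionally.
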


\sloppypar \begin{proof}
Since $G$ is strongly connected, due to the way we model random delays, at each instant $t$ there exists a path between any two compute nodes $i$ and $j$. As a consequence, due to Lemma \ref{lem:delay_diameter}, every column $j \leq n$ of every sub-product matrix $F(r, r + \del{D}) = \del{P}(r)^T \del{P}(r+1)^T \cdots \del{P}(r + \del{D})^T$ contains positive entries\footnote{In other words after $\del{D}$ iterations every compute node communicates with every other compute node.}. This means that for the (improper) coefficient of ergodicity $c(\cdot)$ \cite{SenetaBook}(p. $137$)
\begin{align}
c\big(F(r, r + \del{D})\big)  = &1 - \max_{1 \leq s \leq n+b}(\min_k [F(r, r + \del{D})]_{ks}) \\
 \leq  &1 - \max_{1 \leq s \leq n}(\min_k [F(r, r + \del{D})]_{ks})  < 1
\end{align}
since the maximum over the minimum values in the compute node columns is certainly not zero. Now, if we run consensus with random delays for $t > \del{D}$ steps we divide the forward product $F(1,t)$ into
\begin{align}
F(1,t) = & \prod_{k=1}^{\frac{t}{\del{D}}} F((k-1)\del{D} +1, k \del{D}) \\
 = &F(1,\del{D}) F(\del{D}+1, 2 \del{D}) \cdots F(t-\del{D}+1, t)
\end{align}
and as explained above, $c\big(F((k-1)\del{D} +1, k \del{D}) \big) < 1$ for each term. Now immediately we see that $\sum_{k=1}^\infty \left[1 - c\big(F((k-1)\del{D} +1, k \del{D}) \big) \right] = \infty$, and from Theorem $4.9$ in \cite{SenetaBook}, the product $F(1,t)$ is weakly ergodic. Based on a derivation similar to \eqref{eq:push_sum_convergence}, after initializing the $s$ and $w$ values of the delay nodes to zero,  Push-Sum converges to the true average. Furthermore, if $\max_k\big(F((k-1)\del{D} +1, k \del{D}) \big) \leq c_0 < 1$, the forward product converges geometrically at a rate no worse than $c_0$. 
\end{proof}

\section{Simulations}
\label{sec:simulations}

In this section we use simulations to illustrate the important concepts discussed so far. The first experiment verifies Theorem \ref{thm:poincare} and Corollary \ref{cor:poincare}. One difficulty with verifying these results numerically is that  Theorem \ref{thm:poincare} describes the effect of fixed delays relative to a consensus protocol $P$ on a graph $G$ without delays. To compute the Poincar\'e constant $\del{K}$ explicitly we still need to find a set of canonical paths in $G$ and apply \eqref{eq:poincareK} which can be tedious. Instead, we estimate $\del{K}$ as follows. For a given network of $15$ nodes, protocol $P$ and delay bound $B$, we randomly select delays for all edges, construct $U(\del{P}_{lazy})$ as explained in Section~\ref{sec:fixed} and compute the second eigenvalue of $U$. For each bound $B$ we repeat this procedure $50$ times. Since $\del{K} \geq \frac{1}{1 - \lambda_2(U(\del{P}_{lazy}))}$ we keep the largest $\lambda_2$ out of the $50$ trials to approximately maximize the lower bound on $\del{K}$. Figure \ref{fig:poincarebound} illustrates that the inverse spectral gap increases almost quadratically with $B$. It appears that $O(B^2)$ might be increasing faster than $\del{K}$ so our bound might be loose but not dramatically so. The mismatch could also be a result of poor approximation on $\del{K}$ since for larger $B$, $50$ trials might not be enough to capture the worst possible scenario.

\begin{figure}
\begin{center}
\includegraphics[width=3.5in]{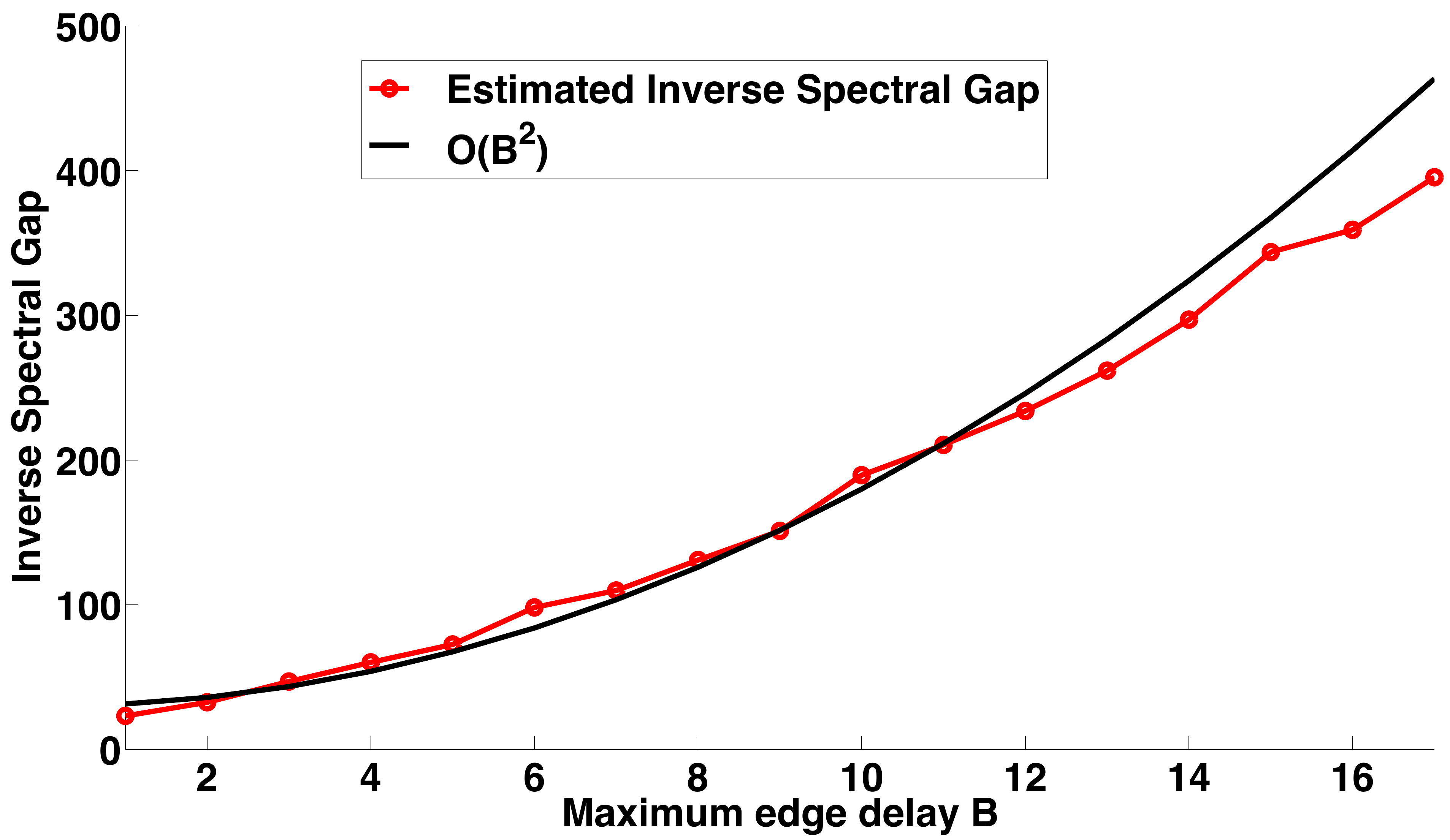}
\end{center}
\caption{\label{fig:poincarebound} (Red) Estimated inverse spectral gap $\frac{1}{1 - \lambda_2(U(\del{P}_{lazy}))}$ for a network $G$ of $15$ nodes when increasing the upper bound $B$ of fixed delays. Each data point is the maximum over $50$ randomly selected delay distributions over the edges of $G$. (Black) An approximate fit of an $O(B^2)$ curve to show that the inverse spectral gap does not deteriorate by worse than a quadratic factor as we increase $B$.}
\end{figure}

In a second simulation we investigate the case of time-varying delays. For a network with $5$ nodes and a maximum random delay of $B=5$, we plot the evolution of the node values when running consensus with equation \eqref{eq:randomdelay_consensus} and Push-Sum using consensus matrices of the form \eqref{eq:qt_general_pushsum}. We initialize the node values to be the node ids $1$ through $5$. In both cases we start with a random row stochastic protocol $P$ without delays and use its transpose to generate the Push-Sum weights. Figure \ref{fig:randomdelay_consensus} illustrates that since $P$ is not doubly stochastic, the compute nodes reach consensus as Corollary \ref{cor:rowstochastic_convergence} suggests, but the consensus value is not the average. Even worse, if we run the simulation again, the different random delays at each iteration will yield a different consensus value. With Push-Sum, on the other hand, the compute nodes always converge to the true average.

\begin{figure}
\begin{center}
\includegraphics[width=3.5in]{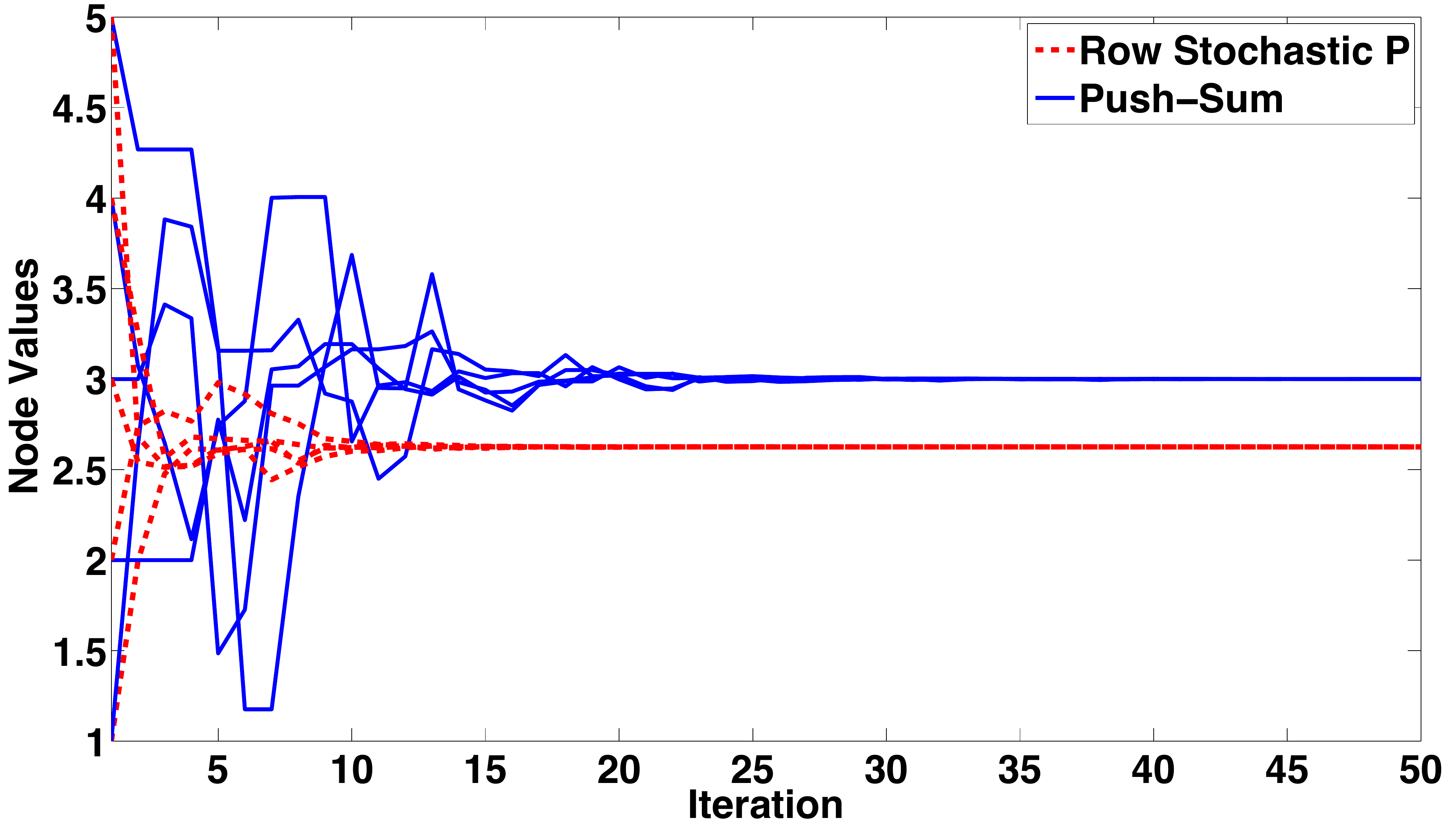}
\end{center}
\caption{\label{fig:randomdelay_consensus} Evolution of the node values on a graph of $5$ nodes with random delays no more than $B=5$. The true average is $x_{ave} = 3$. (Blue) With Push-Sum all nodes reach consensus to the correct average. (Red) Using a row stochastic matrix, as expected consensus is reached but not to the average and the consensus value varies between executions.}
\end{figure}

\section{Concluding Remarks and Future Work}
\label{sec:future}

In this paper we analyze the effect of communication delays in distributed algorithms for consensus and averaging. Initially we assume that each directed link of a communication network $G$ delivers messages with some fixed delay $B$. Delays on different links need not be equal. We show how to model the effect of delays by augmenting $G$ with artificial delay nodes and then use geometric arguments to show that the inverse spectral gap of a consensus protocol $\del{P}$ in the presence of delays does not increase faster than $\Theta(B^2)$. Thus, we still have exponentially fast convergence to a value which in general is not the average. For fixed row stochastic protocols, we can achieve average consensus by rescaling the initial values as explained in Section \ref{sec:consensus}. 

Next, we show how to model time-varying delays---a scenario that is more realistic but also harder to analyze. For general row stochastic consensus protocols we show that convergence to consensus is still guaranteed although the consensus value is itself a random variable. In the last part of the paper we propose and analyze the use of a different consensus protocol based on column stochastic matrices called Push-Sum. With Push-Sum, convergence to the average is always guaranteed and the analysis of the time-varying delay model is significantly simplified. These facts are in agreement with \cite{TsianosCDC2012}, suggesting that Push-Sum is more suitable for practical implementations. 

In the future, for the fixed delay scenario we would like to investigate the following optimization problem: Given a network $G$ and the fixed delays on its links, what is the consensus protocol $P$ that respects the structure of $G$ and reaches consensus as fast as possible in the presence of fixed delays? Notice that since we can use Push-Sum, any column stochastic matrix that does not add edges to $G$ will compute the true average and we are looking for the matrix with the smallest second eigenvalue. It would be interesting to investigate if the techniques used for second eigenvalue optimization for symmetric protocols (see e.g., \cite{randomizedGossip}) could be extended to answer this question. 

At the same time, for our time-varying delay models, the analysis only guarantees convergence and a loose geometric bound in the case of Push-Sum. It would be useful to have a more precise characterization of the convergence rate and to extend the Poincar\'e technique presented in this paper to understand how much do time-varying delays slow down convergence.

\section{Appendix: Proof of Theorem \ref{thm:poincare}}


Consider a graph $G$ with a consensus protocol $P$. Given a set of canonical paths $\Gamma = \{\gamma_{xy}\}$ on $G$ we can compute the Poincar\'e constant $K$. If each link of $G$ delivers messages with some arbitrary fixed delay of no more than $B$, we will show that the Poincar\'e constant $\del{K}$ of $\del{G}$ using the lazy additive reversibilization $U$ of $\del{P}$ is bounded like $\del{K} \leq Z K$ where $Z = \Theta(B^2)$.

We start with the definition of the Poincar\'e constant for $\del{K}$ and use the path associations discussed already to break the sum over all paths into nine summands. Assume that there are $N_{vw}$ canonical paths in $G$ that go through the bottleneck edge $e=(v,w)$ of $G$ and let the bottleneck edge of $\del{G}$ be $\del{e} = (u,z)$ where $u$ is in the set $v^+$ and $z$ is in the the set $w^-$. Let $x,y$ denote the starting and ending node of a path $\del{\gamma}_{i}$. We have
\begin{align} \label{eq:kd_bound}
\del{K} = & \frac{1}{\del{\pi}_{v^+}  [U]_{uz} } \big( T_{1}[x \rightarrow y] +  T_2[x \rightarrow y^-] + T_3[x \rightarrow y^+] \notag \\
& + T_4[x^- \rightarrow y^-] + T_5[x^- \rightarrow y] + T_6[x^- \rightarrow y^+] \notag \\
& + T_7[x^+ \rightarrow y^-] +  T_8[x^+ \rightarrow y] + T_9[ x^+ \rightarrow y^+]\big)
\end{align}
with
\begin{align}
& T_1  =  \sum_{i=1, \del{e} \in \del{\gamma}_{i}}^{N^{vw}}  \abs{\del{\gamma}_{i}} \del{\pi}_x \del{\pi}_y \\
& T_2  =  \sum_{i=1, \del{e} \in \del{\gamma}_{i}}^{N^{vw}} \sum_{k=-\frac{B^{-}}{2}}^{-1} \big( \abs{\del{\gamma}_{i}} + k\big) \del{\pi}_x \del{\pi}_{y^{-}} \\
& T_3  =  \sum_{i=1, \del{e} \in \del{\gamma}_{i}}^{N^{vw}} \sum_{r=1}^{deg(y)} \sum_{k=1}^{\frac{B_{r}}{2}} \big( \abs{\del{\gamma}_{i}} + k\big) \del{\pi}_x \del{\pi}_{y^{+}_{r}} \\
& T_4  = \sum_{i=1, \del{e} \in \del{\gamma}_{i}}^{N^{vw}} \sum_{h=1}^{deg(x)} \sum_{j=-\frac{B_{h}}{2}}^{-1} \sum_{k=-\frac{B^{-}}{2}}^{-1} \big( \abs{\del{\gamma}_{i}} + j + k \big)  \del{\pi}_{x^{-}_{h}}  \del{\pi}_{y^{-}} \\
& T_5  = \sum_{i=1, \del{e} \in \del{\gamma}_{i}}^{N^{vw}} \sum_{h=1}^{deg(x)} \sum_{j=-\frac{B_{h}}{2}}^{-1} \big( \abs{\del{\gamma}_{i}} + j \big) \del{\pi}_{x^{-}_{h}}  \del{\pi}_y \\
& T_6  = \sum_{i=1, \del{e} \in \del{\gamma}_{i}}^{N^{vw}} \sum_{h=1}^{deg(x)} \sum_{r=1}^{deg(y)} \sum_{j=-\frac{B_{h}}{2}}^{-1} \sum_{k=1}^{\frac{B_{r}}{2}}    \big( \abs{\del{\gamma}_{i}} + j + k\big) \del{\pi}_{x^{-}_{h}}  \del{\pi}_{y^{+}_{r}} \\
& T_7  = \sum_{i=1, \del{e} \in \del{\gamma}_{i}}^{N^{vw}} \sum_{j=-\frac{B^{+}}{2}}^{-1} \sum_{k=-\frac{B^{-}}{2}}^{-1}  \big( \abs{ \del{\gamma}_{i}} + j + k\big) \del{\pi}_{x^{+}}  \del{\pi}_{y^{-}} \\
& T_8  = \sum_{i=1, \del{e} \in \del{\gamma}_{i}}^{N^{vw}} \sum_{j=-\frac{B^{+}}{2}}^{-1}  \big( \abs{\del{\gamma}_{i}} + j \big) \del{\pi}_{x^{+}}  \del{\pi}_y \\
& T_9  =\sum_{i=1, \del{e} \in \del{\gamma}_{i}}^{N^{vw}} \sum_{r=1}^{deg(y)} \sum_{j=-\frac{B^{+}}{2}}^{-1}  \sum_{k=1}^{\frac{B_{r}}{2}} \big( \abs{ \del{\gamma}_{i}} + j +k \big) \del{\pi}_{x^{+}} \del{\pi}_{y^{+}_{r}} 
\end{align}
To obtain a cleaner bound for $\del{K}$ we assume that $P$ is doubly stochastic, recalling that the stationary distribution of delay nodes is $\del{\pi}_{x^{*}} \leq p \del{\pi}_x = \frac{p \pi_x}{c}$ for  $p = \max_{i \neq j}(p_{ij})$ and replacing $*$ with either $+, -$. Recall also that each path in $\del{G}$ corresponds to exactly one path in $G$. Below we show how to bound the term $T_6$; bounds for all of the other terms defined above are obtained using similar arguments. Observe that for every path $\gamma_{xy}$ between compute nodes $x$ and $y$, if $\gamma_{xy}$ goes through a bottleneck edge $e$ in $G$, then all the delay paths $\del{\gamma}$ that are associated with $\gamma_{xy}$ will go through $\del{e}$ in the middle of the delay chain that replaces $e$. So, for term $T_6$ we have
\begin{align}
T_6 \leq & \sum_{i=1, e \in \gamma_{i}}^{N^{vw}} \sum_{h=1}^{\operatorname{deg}(x)} \sum_{r=1}^{\operatorname{deg}(y)} \sum_{j=-\frac{B_{h}}{2}}^{-1} \sum_{k=1}^{\frac{B_{r}}{2}}    \big( (B+1) \abs{\gamma_{i}} + j + k\big)  \notag \\
& \times \frac{ p \pi_x}{c}  \frac{p \pi_y}{c} \\
\leq & \frac{p^{2}}{c^2} \sum_{i=1, e \in \gamma_{i}}^{N^{vw}} \operatorname{deg}(x) \operatorname{deg}(y) \notag \\
& \times \sum_{j=-\frac{B}{2}}^{-1} \sum_{k=1}^{\frac{B}{2}}    \big( (B+1) \abs{\gamma_{i}} + j + k\big)  \pi_x \pi_y 
\end{align}
Now since all paths $\gamma_{i}$ are at least one edge long, bounding the node degrees by the maximum degree $d_{max}$ in $G$ gives
\begin{align}
T_6 \leq & \frac{p^{2}}{c^2} d_{max}^{2} \sum_{j=-\frac{B}{2}}^{-1} \sum_{k=1}^{\frac{B}{2}}   
\big( (B+1)  + j + k\big) \notag \\
& \times \sum_{i=1, e \in \gamma_{i}}^{N^{vw}}   \abs{\gamma_{i}} \pi_x \pi_y \\
= & \frac{p^{2} d_{max}^{2} }{c^2} \frac{B^3 + B^2}{4} \sum_{i=1, e \in \gamma_{i}}^{N^{vw}}   \abs{\gamma_{i}} \pi_x \pi_y 
\end{align}
Through a similar derivation, all nine terms can be bound by a constant times $ \sum_{i=1, e \in \gamma_{i}}^{N^{vw}}   \abs{\gamma_{i}} \pi_x \pi_y$ which appears in the expression for the Poincar\'e constant $K$ without delays (see \eqref{eq:poincareK}). To make the exact expression for $K$ appear, we focus on the leading term in \eqref{eq:kd_bound} to see that
\begin{align}
 \frac{1}{\del{\pi}_{v^+}  [U]_{uz} c^{2}} = & \frac{c}{\pi_v  \frac{[U]_{uz} + [\tilde{U}]_{uz}}{2} c^{2}} =  \frac{2}{\pi_v  ( [U]_{uz} + 0) c} \\
 = & \frac{2}{\pi_v  c}  = \frac{2 p_{vw}}{c} \frac{1}{\pi_v p_{vw}}.
\end{align}
Next, remembering that $e=(v,w)$ is the bottleneck edge, after computing the exact constants in all terms, we write $\del{K} \leq Z K$ where $Z$ is a function of the node degrees, edge delays and consensus matrix $P$. Specifically,
\begin{align}
\del{K} \leq &  \frac{2 p_{vw}}{c}    \Big[ (B+1) + p \frac{3B^2 + 2B}{8} + p\ d_{max} \frac{5B^2 + 6B}{8}  \notag \\
& +  p^{2} d_{max} \frac{B^3}{8} +  p d_{max} \frac{3B^2 + 2B}{8}  + p^{2} d_{max}^{2} \frac{B^3 + B^2}{4} \notag \\
& + p^{2} \frac{B^3}{8} +  p \frac{3B^2 + 2B}{8}  +  p^{2} d_{max} \frac{B^3 + B^2}{4} \Big] \notag \\
& \times   \underbrace{\frac{1}{\pi_v p_{vw}} \sum_{i=1, e \in \gamma_{i}}^{N^{vw}}   \abs{\gamma_{i}} \pi_x \pi_y}_{K}  =Z K.
\end{align}
Finally, focusing on the expression for $Z$, after some algebra, we see that
\begin{align}
Z =  \frac{ p_{vw}}{4 c} \Big[&p^{2}(2d_{max}^{2} + 3d_{max}+1) B^{3}  \notag \\
& + p(2pd_{max}^{2} + 2pd_{max} + 8d_{max} + 6) B^{2}  \notag \\
& + (8pd_{max} + p + 8) B + 8\Big]
\end{align}
which completes the proof.

\bibliographystyle{IEEEtran} 
\bibliography{../PhDThesis/References}

\end{document}